\numberwithin{equation}{section}
\def\beq{\begin{equation}}
\def\eeq{\end{equation}}
\def\bea{\begin{eqnarray}}
\def\eea{\end{eqnarray}}
\def\p{\partial}
\def\G{\Gamma}
\def\g{\gamma}
\def\a{\alpha}
\def\res{{\rm res}}
\def\wh{\widehat}
\def \matrix #1 {\left(\begin{array}{cc} #1 \end{array}\right)}
\newtheorem{theo}{Theorem}[section]
\newtheorem{cor}[theo]{Corollary}
\newtheorem{lem}[theo]{Lemma}
\theoremstyle{definition}
\newtheorem{rem}[theo]{Remark}
\begin{document}
\title{Triangular reductions of $2D$ Toda hierarchy}
\author{Igor Krichever \and Anna Ilyina}\thanks{This work has been funded by the Russian Academic Excellence Project
'5-100'}
\address{Skolkovo Institute of Science and Technology, Moscow; Columbia University, New York;
National Research University Higher School of Economics, Russian Federation}
\email{krichev@math.columbia.edu}

\address{Skolkovo Institute of Science and Technology; National Research University Higher School of Economics, Russian Federation}
\email{ekrez@yandex.ru}
\begin{abstract} New reductions of the 2D Toda equations associated with low-triangular difference operators are proposed. Their explicit Hamiltonian description is obtained.
\end{abstract}
\maketitle
%\tableofcontents

\section{Introduction}

A recent burst of interest to a theory of linear difference operators has been motivated by their
connection with the theory of discrete-time integrable systems (pentagram map and it's higher dimensional generalization), representation theory (Coxeter's friezes) and the theory of cluster algebras.

The pentagram map is defined for $n$-gons in $\mathbb{RP}^{2}$ as follows: a vertex $v_{i}$ of $n$-gon $(v_{1},$ $\ldots,$ $v_{n})$ is mapped to a point which is the intersection of two
diagonals  $(v_{i-1}, v_{i+1}) $ and $(v_{i}, v_{i+2}).$ If  $n$ and $k+1$ are co-prime, then, as shown in \cite{ovstab}, the moduli space of $n$-gons
in $\mathbb{RP}^{k}$ is isomorphic, as algebraic varieties, to the space $\mathcal{E}_{k+1,n}$ of $n$-periodic linear difference equations
\beq\label{difeq}
V_i=a_i^{(1)}V_{i-1}-a_i^{(2)}V_{i-2}+\cdots+(-1)^{k-1} a_i^{(k)}V_{i-k}+(-1)^{k}V_{i-k-1},\
\eeq
whose {\it all solutions} are (anti)periodic
\beq\label{superperiodic}
V_{i+n}=(-1)^{k}V_{i}.
\eeq
In \cite{Kr1}  such equations were called {\it superperiodic}.

More generally, equations (\ref{difeq}) without constraint (\ref{superperiodic}) correspond to, the so-called, twisted $n$-gon in $\mathbb{RP}^k$, that is a sequence  $v_j\in \mathbb{RP}^k, j\in \mathbb Z,$ for which there is a projective linear transformation $M$ of $\mathbb {RP}^{k}$ such that $v_{j+n}=Mv_j$.

\medskip
In \cite{ovstab} it was shown that the pentagram map is a discrete complete integrable system, i.e. that the space of $n$-periodic lower-triangular operators (\ref{difeq}) of order 3 is a Poisson manifold and a complete set of integrals in involution for the pentagram map was constructed. Algebraic-geometrical integrability of the pentagram map was proved in \cite{soloviev}.

In \cite{ovstab2} an explicit construction of duality of the spaces $\mathcal{E}_{k+1,n}$ and $\mathcal{E}_{n-k-1,n}$ was proposed, which is a generalization of  the classical Gale duality for $n$-gons. In \cite{Kr1} this duality was clarified and connected with the theory of commuting difference operators. Along the way of establishing this connection a spectral theory of strictly lower triangular difference operators
\beq\label{operator}
L=T^{-k-1}+\sum_{j=1}^{k}a_{i}^{(j)}T^{-j},\ a_{i}^{(j)}=a_{i+n}^{(j)}
\eeq
was developed. Here $T$ is the shift operator, $T\psi_{j}=\psi_{j+1}.$ Throughout the paper it is assumed that the leading coefficient of $L$ is non-zero:
\beq\label{a1}
a_j^{(1)}\neq 0\,.
\eeq
The spectral theory of the triangular difference operators is of its own interest. The starting point of this work was an observation: the spectral theory of triangular operators is naturally connected with a special reduction of
the $2D$ Toda hierarchy.
\begin{rem} For definiteness, in this paper we consider only the case of lower-triangular reductions, since the involution
$L\to L^*$, where
\beq\label{adj}
L^*=T^{k+1}+\sum_{j=1}^k T^ja_i^{(j)}=T^{k+1}+\sum_{j=1}^k a_{i+j}^{(j)}T^j.
\eeq
is the formal adjoint operator, establishes an equivalence of the cases of lower- and upper-triangular operators.
\end{rem}

Recall, that the $2D$ Toda equations by themselves
\begin{eqnarray}\label{2dtoda}
\partial^2_{\xi\eta}\varphi_{i}=e^{\varphi_{i}-\varphi_{i+1}}-e^{\varphi_{i-1}-\varphi_{i}}
\end{eqnarray}
is the compatibility condition of two linear problems
\begin{equation}\label{2toda}
 \begin{cases}
\partial_{\xi}\Psi_{i}=v_{i}\Psi_{i}+\Psi_{i-1}\\
\partial_{\eta}\Psi_{i}=c_{i}\Psi_{i+1}, \ c_{i}=e^{\varphi_{i}-\varphi_{i+1}}
 \end{cases}
\end{equation}
The $2D$ Toda hierarchy  is a system of commuting flows on a space of functions $\varphi$ depending on one discrete variable $i$ and two sets of continuous variables $t_m^{\pm}$, i.e. $\varphi=\varphi_i(t_1^+,t_1^-,t_2^+,t_2^-,\ldots)$.
The variables $t_{1}^{+}$ and $t_{1}^{-}$ are identified with $\xi$ and $\eta$.
The hierarchy is the compatibility condition of a system of the linear problems
\beq\label{hierarch}
\p_{t_{m}^{\pm}}\Psi=L^{\pm}_{m}\Psi
\eeq
where $L^{\pm}_{m}$ are difference operators of the form:
\beq\label{LJ}
L_{m}^\pm=\sum_{j=0}^{m} a_{i,m}^{(j,\pm)}T^{\pm j}
\eeq
with the leading coefficients
\beq\label{leading}
\ a_{i,m}^{(m,-)}=1,\ \ \ \ \ \ \ a_{i,m}^{(m,+)}=e^{\varphi_i-\varphi_{i+m}}
\eeq
It is easy to check that the compatibility of the second equation in (\ref{2toda}) with (\ref{hierarch}) implies
\beq\label{LJ1}
a_{i,m}^{(0,-)}=\p_{t_m^-}\varphi_i ,\ \ \ \ \ \ a_{i,m}^{(0,+)}=0.
\eeq

\medskip
\noindent
{\bf Remark.} It is necessary to emphasize that although the hierarchy of any soliton equation, as a linear space of commuting vector fields is well-defined, in general there is no canonical choice of the basic "times" of the hierarchy or equivalently a canonical basis of commuting vector fields. The condition above that the operators $L_m^{\pm}$ are upper (lower) triangular operators of order $m$ fixes that ambiguity only partially. By this constraint the times are defined up to a linear triangular transformations $\tilde t_m^{\pm}=t_m^{\pm}+\sum_{\mu<m}c_\mu^{\pm}t_\mu^\pm$. We will comment more on that in Sections 2 and 3 below.

\medskip

Let us fix one of the times of the hierarchy: $t_{k+1}^-$  (or more generally a linear combination of the first $(k+1)$ times),
and consider solutions the hierarchy that {\it do not} depend on it, i.e.
\beq\label{stat}
\partial_{t_{k+1}^{-}}\varphi_{i}=0
\eeq
The space of such solutions can be identified with the space of the auxiliary operators $L_{k+1}^-$.
Note, that from (\ref{LJ1}) it follows that under the constraint (\ref{stat}) the operator $L=L_{k+1}^-$ becomes  strictly low triangular, i.e. it takes the form (\ref{operator}).

The restriction of each $t_m^{\pm}$ flow of the hierarchy onto the space of stationary with respect to $t_{k+1}^-$ solutions can be seen as a finite-dimensional system admitting Lax representation
\beq\label{lax}
\p_{t_m^{\pm}} L=[L_{m}^{\pm},L]
\eeq
For $\xi=t_1^+$ the auxiliary operator has the form $L_1^-=v_i+T^{-1}$ with $v_i=\p_{\xi} \varphi_i$ and  (\ref{lax}) is equivalent to the system equations for $a_i^{(1)}=e^{\varphi_i-\varphi_{i-1}}$ and $a_i^{(j)}, j=2\ldots,k$:
\begin{equation}\label{LT1}
\begin{cases}
\partial_{\xi} a_{i}^{(j)}=a_{i-1}^{(j-1)}-a_{i}^{(j-1)}+a_{i}^{(j)}(v_{i}-v_{i-j}), \ j=2,\ldots,k\\
 0=a_{i-1}^{(k)}-a_{i}^{(k)}+(v_{i}-v_{i-k-1}),\ \ v_i=\p_{\xi}\varphi_i.
\end{cases}
\end{equation}
Similarly, for $\eta=t_1^+$, we get the system
\begin{equation}\label{LT2}
\partial_{\eta} a_{i}^{(j)}=c_{i}a_{i+1}^{(j+1)}-c_{i-j-1}a_{i}^{(j+1)},  j=1,\ldots, k\\
\end{equation}
where $a_i^{(1)}=e^{\varphi_i-\varphi_{i-1}},\ c_i=e^{\varphi_i-\varphi_{i+1}}$.

The main goal of this paper is to construct a bi-Hamiltonian theory of systems (\ref{LT1}) and ({\ref{LT2}). We show that the space of strictly low-diagonal difference operators $L$ admits two different structures of the Poisson manifold and identify the corresponding Hamiltonians.

For $k=1$ the systems (\ref{LT1}) and (\ref{LT2}) have the most simple and interesting form:
\beq\label{lttoda1}
\p_{\xi}\varphi_{i-1}-\p_\xi\varphi_{i+1}=e^{\varphi_{i}-\varphi_{i-1}}-e^{\varphi_{i+1}-\varphi_{i}}
\eeq
\beq\label{lttoda2}
\p_{\eta}\varphi_i-\p_\eta\varphi_{i-1}=e^{\varphi_{i-1}-\varphi_{i+1}}-e^{\varphi_{i-2}-\varphi_{i}}
\eeq
{\it A posteriori}, in these cases one of our main result can be directly verified. Namely, it is easy to check that the systems (\ref{lttoda1}) and (\ref{lttoda2}) are Hamiltonian with respect to the form $\omega=\sum_{i=1}^n d\varphi_i\wedge d\varphi_{i+1}, \varphi_i=\varphi_{i+n},$ with the Hamiltonians
\beq\label{ham}
H^-=\sum_{i=1}^n e^{\varphi_i-\varphi_{i-1}},\ \ H^+=\sum_{i=1}^n e^{\varphi_{i-2}-\varphi_i}\,, \varphi_i=\varphi_{i+n},
\eeq
respectively.
In this simple case the second Hamiltonian structure of equations (\ref{lttoda1}) and (\ref{lttoda2}) is not so obvious.
In the last  part we prove that under (one-to-one for odd $n$) change of variables
$e^{\varphi_i-\varphi_{i-1}}=x_i-x_{i-2}+e_1$ equations  (\ref{lttoda1}) take the form of Hamiltonian equations with respect to the form
$\widetilde \omega=\sum_{i=1}^n dx_i\wedge dx_{i-1}, x_i=x_{i+n}$, with the Hamiltonian
$$
\widetilde H^-=\sum_{i=1}^n  x_i^2(x_{i-1}-x_{i+1})
$$

\section{Necessary facts}

In this section we present necessary facts from the spectral theory of strictly lower-triangular operators and the construction of algebraic-geometrical solutions of the $2D$ Toda hierarchy.

\subsection{The spectral theory of lower-triangular difference operators.}

In the modern approach to the spectral theory of periodic difference operators central is the notion  of a {\it spectral curve} associated with each $n$-periodic difference operator $L$. By definition, points of the spectral curve parameterize Bloch solutions of the equation
\beq\label{eigen}
L\psi=E\psi,
\eeq
i.e. the solutions of (\ref{eigen}) that are eigenfunctions for the monodromy operator $T^{-n}$
\beq\label{bloch}
T^{-n}\psi=w\psi.
\eeq
Let $\mathcal{L}(E)$ be a space of the solutions of equation (\ref{eigen}). It is a linear space of dimension equal to the order of $L$. The monodromy operator preserves $\mathcal{L}(E)$ and, hence, defines on it a finite-dimensional operator $T^{-n}(E)$. The pairs of complex numbers $(w,E)$ for which there exists a common solution of equations (\ref{eigen}) and (\ref{bloch}) satisfy the characteristic equation:
$$R(w,E)=det\bigl( w\cdot 1-T^{-n}(E)\bigr)=0$$
Similarly, the same polynomial $R(w,E)$ can be obtained as the characteristic polynomial of the finite dimensional operator $L(w)$ that is a restriction of $L$ on the space $ \mathcal{T} (w):=\{\psi \ | \ w\psi_{i+n}=\psi_{i}\}$:
\beq\label{specR}
R(w,E)=det\bigl( E\cdot 1-L(w)\bigr)=0,\ \ L(w):=L \big|_{\mathcal{T}(w)}
\eeq

\medskip
The family of algebraic curves that arises as spectral curves depends on a family of operators. In case of   strictly lower-triangular difference operators in (\cite{Kr1}) it was noticed that the characteristic equation of such operator is of the form
\beq\label{spectralcurve}
R(w,E)=w^{k+1}-E^{n}+\sum_{i>0, j\ge 0,\ ni+(k+1)j<n(k+1)} r_{ij}w^{i}E^{j}=0,
\eeq
where $r_{1,0}=\prod_{i=1}^n a_{i}^{1}\neq 0$ (due to (\ref{a1})).

If $n$ and $k+1$ are co-prime, then the affine curve defined in $\mathbb C^2$ by (\ref{spectralcurve})
is compactified by one point $p_-$, where the functions $w(p)$ and $E(p)$, naturally defined
on $\Gamma$,  have pole of order $n$ and $k+1$, respectively. In other words, if one chooses a local coordinate $z$ in the neighborhood of $p_-$ such that $w=z^{-n}$, then the Laurent expansion of $E$ has the form:
\beq\label{Eexpansion}
E= z^{-k-1}\left(1+\sum_{s=1}^\infty e_{s}z^{s}\right), \ \ w=z^{-n}.
\eeq

As it was emphasized in \cite{Kr1}, the specific form of equation (\ref{spectralcurve}) allows one
to single out another marked point $p_+$ on $\Gamma$. Namely, it is the only preimage of $E=0$ where $w=0$. It turns out that at this point $E=E(p)$ has a simple zero, and the function $w=w(p)$ has zero of order $n$
\beq\label{w0}
w=\frac{1}{r_{1,\,0}}E^{n}\left(1+\sum_{s=1}^\infty w_{s}E^{s}\right)
\eeq
Analytic properties of the Bloch solutions in the neighborhoods of marked points are described by the following two statements:
\begin{lem}\label{L1}(\cite{Kr1})
Let $L$ be an  operator of form (\ref{operator}) whose order and the period are co-prime. Then there is a unique formal series $E(z)$ of the form (\ref{Eexpansion}) such that the equation $L\psi=E\psi$ has a unique formal solution of the form
\beq\label{formalpsi}
\psi_{i}(z)=z^{i}\bigl(1+\sum_{s=1}^{\infty} \xi_{s}^{-}(i)z^{s}\bigr).
\eeq
with periodic coefficients $\xi_{s}^-(i)=\xi_{s}^-(i+n)$ normalized by the condition
$\xi_{s}^-(0)=0$.
\end{lem}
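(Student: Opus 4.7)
The plan is to substitute $\psi_i(z)=z^i\phi_i(z)$ with $\phi_i(z)=1+\sum_{s\ge 1}\xi_s^-(i)z^s$ and $E(z)=z^{-k-1}(1+\sum_{s\ge 1}e_sz^s)$ into $L\psi=E\psi$, and then to determine the unknowns $\xi_s^-(i)$ and $e_s$ inductively on $s$. With the conventions $\xi_0^-(i):=1$ and $\xi_m^-(i):=0$ for $m<0$, clearing the overall factor $z^{i-k-1}$ turns the eigenvalue equation into
\beq\label{planrec0}
\phi_{i-k-1}(z)+\sum_{j=1}^{k}a_i^{(j)}z^{k+1-j}\phi_{i-j}(z)=\widetilde E(z)\phi_i(z),
\eeq
where $\widetilde E(z)=z^{k+1}E(z)$. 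Reading off the coefficient of $z^s$ for $s\ge 1$ then produces a first-order linear difference equation
\beq\label{planrec}
\xi_s^-(i-k-1)-\xi_s^-(i)=e_s+\Phi_s(i),
\eeq
in which $\Phi_s(i)$ is an $n$-periodic expression in $i$ built out of the coefficients $a_i^{(j)}$ together with $\xi_{s'}^-,e_{s'}$ for $s'<s$ only; the $z^0$ identity reduces tautologically to $1=1$.

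The induction step at level $s$ then has two parts. Summing \eqref{planrec} over one period $i=1,\ldots,n$ kills the left-hand side by periodicity of $\xi_s^-$ and forces the unique choice
$$e_s=-\frac{1}{n}\sum_{i=1}^{n}\Phi_s(i),$$
which pins down $E(z)$ order by order. With $e_s$ so fixed, the right-hand side of \eqref{planrec} has zero mean over a period, and I would then invert the finite-difference operator $T^{-(k+1)}-I$ on $n$-periodic sequences up to its constant kernel to produce $\xi_s^-(i)$; the normalization $\xi_s^-(0)=0$ selects the unique representative, closing the recursion.

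The key structural ingredient is the coprimality assumption $\gcd(n,k+1)=1$: it is precisely what makes the shift $i\mapsto i-(k+1)$ a cyclic permutation of $\bbZ/n\bbZ$, which in turn forces the operator $T^{-(k+1)}-I$ on $n$-periodic functions to have kernel and cokernel both one-dimensional (the constants). Equivalently, its eigenvalues $\zeta^{-(k+1)}-1$ along $n$-th roots of unity $\zeta$ vanish only at $\zeta=1$. This is exactly what makes the zero-mean solvability condition simultaneously necessary and sufficient. I expect this to be the only nonformal point in the argument; the remainder is bookkeeping of the lower-order terms collected into $\Phi_s$, and the simultaneous uniqueness of $E(z)$ and of $\psi_i(z)$ follows at once from the uniqueness at each inductive step.
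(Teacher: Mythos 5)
Your proposal is correct and follows essentially the same route as the paper: substitute the ansatz, determine $e_s$ at each order by averaging over a period, and then solve the resulting inhomogeneous difference equation for $\xi_s^-(i)$ using coprimality of $n$ and $k+1$, with $\xi_s^-(0)=0$ fixing the constant ambiguity. The only cosmetic difference is that the paper inverts the shift explicitly, iterating the relation $m$ times with $m(k+1)\equiv 1\pmod n$ to reduce the order-$(k+1)$ equation to a first-order one, whereas you phrase the same coprimality argument abstractly via the kernel and cokernel of $T^{-(k+1)}-I$ on $n$-periodic sequences.
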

For further use let us briefly outline the proof.

\noindent {\it Proof.} The substitution of (\ref{formalpsi}) and (\ref{Eexpansion}) into the equation $L\psi=E\psi$ gives a system of difference equations for unknown constants $e_s$ and unknown functions $\xi_s(i)$ of the discrete variable $i$.
The first of them is the equation
\beq\label{xi1}
e_1+\xi_1^-(i)-\xi_1^-(i-k-1)=a_i^{(k)}\,.
\eeq
The periodicity constraint for $\xi_1^-$ uniquely defines
\beq\label{e1}
e_1=n^{-1}\sum_{i=1}^na_i^{(k)}
\eeq
and reduces difference equation (\ref{xi1}) of order $k+1$ to the difference equation of order 1:
\beq\label{xi11}
me_s+\xi_1^-(i)-\xi_1^-(i-1)=\sum_{j=0}^{m-1} a_{i-j(k+1)}^{(k)}\,,
\eeq
where $m$ is the integer $1\leq m<n$ such that $m(k+1)=1 \, ({\rm mod}\ n)$. Equation (\ref{xi11}) and the initial condition $\xi_1^-(0)=0$ uniquely defines $\xi_1^-(i)$.

For arbitrary $s$ the defining equation for $e_s$ and $\xi_s^-$ has the form:
\beq\label{xi}
e_s+\xi_s^-(i)-\xi_s^-(i-k-1)=Q_s(e_1,\ldots, e_{s-1};\xi_1,\ldots,\xi_{s-1}, a_i^{(j)})
\eeq
where $Q_s$ is an explicit function linear in $e_{s'}, \xi_{s'},\ s'<s,$ and polynomial in $a_i^{(j)}$.
The same arguments as above show that it has a unique periodic solution. The lemma is proved.
\medskip
\begin{lem}\label{L2} (\cite{Kr1})
Equation $L\psi=E\psi$ has an unique formal solution of the form
\beq\label{two}
\psi_{i}(E)=e^{\varphi_{i}}E^{-i}\bigl(1+\sum_{s=1}^{\infty}\xi_{s}^+(i)E^{s}\bigr), \ a_{i}^{(1)}=e^{\varphi_{i}-\varphi_{i-1}},
\eeq
normalized by the condition $\xi_{s}^+(0)=0$ .
\end{lem}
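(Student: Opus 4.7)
The plan is to follow the same strategy as the proof of Lemma \ref{L1}: substitute the ansatz into $L\psi = E\psi$, match coefficients of powers of $E$, and solve the resulting family of difference equations by recursion. Writing $\psi_i(E) = e^{\varphi_i} E^{-i} f_i(E)$ with $f_i(E) = 1 + \sum_{s\ge 1} \xi_s^+(i) E^s$ and observing that $\psi_{i-j} = e^{\varphi_{i-j}} E^{j-i} f_{i-j}(E)$, the equation $L\psi_i = E\psi_i$ becomes, after cancellation of the common factor $E^{-i}$,
\beq
\sum_{j=1}^{k} a_i^{(j)} e^{\varphi_{i-j}} E^{j} f_{i-j}(E) + e^{\varphi_{i-k-1}} E^{k+1} f_{i-k-1}(E) = e^{\varphi_i} E f_i(E)\,.
\eeq

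The decisive observation, which distinguishes this situation from Lemma \ref{L1}, is that the $E\to 0$ leading behavior is supplied on the left by the $j=1$ summand and on the right by the factor $E$. Matching the coefficients of $E^{1}$ forces $a_i^{(1)} e^{\varphi_{i-1}} = e^{\varphi_i}$, which is precisely the stated relation $a_i^{(1)} = e^{\varphi_i - \varphi_{i-1}}$. For $s\ge 1$, matching the coefficient of $E^{s+1}$ and using this relation to rewrite the $j=1$ contribution yields a recursion of the shape
\beq
e^{\varphi_i}\bigl(\xi_s^+(i) - \xi_s^+(i-1)\bigr) = R_s\bigl(i;\, \xi_1^+,\ldots,\xi_{s-1}^+;\, a_i^{(j)}\bigr)\,,
\eeq
where $R_s$ is an explicit expression polynomial in the coefficients $a_i^{(j)}$ and linear in the previously determined $\xi_{s'}^+$, $s' < s$, evaluated at shifted arguments $i-j$.

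The critical feature is that this recursion is \emph{first order} in the discrete variable $i$, in contrast to the $(k{+}1)$-th order difference equation arising in the proof of Lemma \ref{L1}. Consequently no periodicity requirement is needed for uniqueness: once the right-hand side is known, the normalization $\xi_s^+(0) = 0$ alone determines $\xi_s^+(i)$ for all $i \in \mathbb{Z}$. An induction on $s$, with the trivial base case $\xi_0^+(i) \equiv 1$, then produces the unique formal solution of the required form. I anticipate no real obstacle: the drop in the order of the recursion in $i$ is a direct consequence of $L$ being strictly lower triangular, so it is the low-shift term $a_i^{(1)} T^{-1}$---rather than the top-shift term $T^{-k-1}$---that governs the expansion at the point $p_+$ where $E$ vanishes.
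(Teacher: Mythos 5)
Your proposal is correct and follows essentially the same route as the paper: substituting the ansatz yields, at order $E^{1}$, the relation $a_i^{(1)}=e^{\varphi_i-\varphi_{i-1}}$, and at each higher order a \emph{first-order} nonhomogeneous difference equation $\xi_s^+(i)-\xi_s^+(i-1)=e^{-\varphi_i}q_s(\xi_1^+,\ldots,\xi_{s-1}^+,a_i^{(j)})$, which together with $\xi_s^+(0)=0$ determines $\xi_s^+(i)$ recursively with no periodicity requirement. Your observation that the drop to first order (in contrast to the order-$(k{+}1)$ equation of Lemma \ref{L1}) is what makes the normalization alone suffice is exactly the point of the paper's argument.
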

{\it Proof.} The substitution of (\ref{two}) into (\ref{eigen}) gives a system of nonhomogeneous first order difference equations equations for unknown coefficients $\xi_s^-$. For $s=1$ we have
\beq\label{chi1}
\xi_1^+(i)-\xi_1^+(i-1)=e^{\varphi_{i-2}-\varphi_{i}} a_i^{(2)}
\eeq
For any $s$ the equations have similar form
\beq\label{chi}
\xi_s^+(i)-\xi_s^+(i-1)=e^{-\varphi_i}q_s(\xi_1^+,\ldots,\xi^+_{s-1},a_i^{(j)})\,,
\eeq
which together with the initial conditions recurrently define $\xi_s^+(i)$ for all $i$.

The uniqueness of the formal solution (\ref{two}) implies
\begin{cor}\label{cor:1} The formal series (\ref{two}) is the Bloch solution, i.e. it satisfies (\ref{bloch}) with
\beq\label{wzero}
w(E)=\psi_{-n}(E)=r_{1,0}^{-1}E^n\left(1+\sum_{s=1}^\infty w_s E^s\right)
\eeq
\end{cor}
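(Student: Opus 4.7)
The plan is to exploit the uniqueness statement embedded in Lemma \ref{L2}. Because the coefficients of $L$ are $n$-periodic in $i$, the shifted series $\widetilde\psi_i(E) := \psi_{i-n}(E)$ is again a formal solution of $L\widetilde\psi = E\widetilde\psi$. The strategy is to rescale $\widetilde\psi$ by a scalar function of $E$ to bring it into the normalized shape of (\ref{two}), and then to identify it with $\psi_i(E)$ by uniqueness.

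To make this concrete, set $c_0 := \varphi_n - \varphi_0$, which is independent of the choice of starting index by periodicity of $a_i^{(1)} = e^{\varphi_i - \varphi_{i-1}}$. Then $\varphi_{i-n} = \varphi_i - c_0$, and a direct substitution gives
$$\widetilde\psi_i(E) = e^{-c_0} E^{n}\cdot e^{\varphi_i}E^{-i}\Bigl(1 + \sum_{s\ge 1}\xi_s^+(i-n) E^s\Bigr).$$
Define the scalar $g(E) := e^{-c_0} E^{n}\bigl(1 + \sum_{s\ge 1}\xi_s^+(-n) E^s\bigr)$. Since $g$ depends only on $E$, the quotient $\widetilde\psi_i(E)/g(E)$ is still a formal solution of $L\phi = E\phi$; its leading factor is $e^{\varphi_i}E^{-i}$; and at $i = 0$ the Taylor series in $E$ reduces to $1$, which is precisely the normalization $\xi_s^+(0) = 0$ of Lemma \ref{L2}. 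Uniqueness therefore forces $\widetilde\psi_i/g(E) = \psi_i$, which reads $\psi_{i-n}(E) = g(E)\psi_i(E)$; this is the Bloch relation (\ref{bloch}) with $w(E) = g(E)$.

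It remains to identify $g(E)$ with the expression in (\ref{wzero}). From the definition of the spectral curve, $r_{1,0} = \prod_{i=1}^n a_i^{(1)} = \prod_{i=1}^n e^{\varphi_i - \varphi_{i-1}} = e^{c_0}$, so
$$w(E) = g(E) = r_{1,0}^{-1}\,E^n\Bigl(1 + \sum_{s\ge 1} w_s E^s\Bigr), \qquad w_s = \xi_s^+(-n).$$
The equality $w(E) = \psi_{-n}(E)$ then follows by setting $i = 0$ in the Bloch relation after absorbing the overall gauge factor $\psi_0(E) = e^{\varphi_0}$. As a consistency check, substituting the ansatz $w = \alpha E^n(1 + O(E))$ into (\ref{spectralcurve}) and balancing the lowest-order terms $r_{1,0}w - E^n$ recovers $\alpha = r_{1,0}^{-1}$ without computing $c_0$ at all.

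The only potential obstacle is bookkeeping: one must check that the $E$-dependent rescaling by $g(E)$ preserves the two defining features of $\psi$. The first, solvability of $L\phi = E\phi$, is automatic because $L$ acts only on the discrete variable $i$ and so commutes with multiplication by any function of $E$; the second, the normalization $\xi_s^+(0) = 0$, is ensured by the very choice of $g(E)$ as the value of $\widetilde\psi_i/(e^{\varphi_i}E^{-i})$ at $i = 0$. Once these are in place, Lemma \ref{L2} closes the argument.
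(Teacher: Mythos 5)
Your proof is correct and follows essentially the same route as the paper, which simply invokes the uniqueness statement of Lemma \ref{L2}: the shifted series $\psi_{i-n}$ is again a solution by $n$-periodicity of the coefficients, and after dividing by the scalar $g(E)$ it satisfies the normalization of (\ref{two}), hence coincides with $\psi_i$. Your identification $e^{c_0}=\prod_i a_i^{(1)}=r_{1,0}$ and the remark about the gauge factor $\psi_0=e^{\varphi_0}$ just make explicit the bookkeeping the paper leaves implicit.
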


From Lemma \ref{L1} it follows that the component $\psi_i(p),\, p:=(w,E)\in \Gamma,$ of the Bloch solution $\psi(p)$ considered as a function on the spectral curve has {\it zero of order $i$} at the marked point $p_{-}$. Lemma \ref{L2} implies that $\psi_i(p)$ has {\it pole of order $i$} at the marked point $p_+$. Then the standard arguments (for details see \cite{babelon}) show that $\psi_i$ is a meromorphic function on $\Gamma$ having away of the marked point $g$ poles $\gamma_1,\ldots,\gamma_g$ that {\it do not depend} on $i$. These analytic properties are the defining properties of, the so-called, discrete Baker-Akhiezer (BA) function, introduced in \cite{kr3}. That establishes a connection of the spectral theory of the lower-triangular operators with the theory of commuting difference operators (see details in \cite{kr-diff}), and the theory of algebraic-geometrical solutions of the $2D$ Toda hierarchy.

\medskip

The correspondence
\beq\label{correspondence}
L \longmapsto \{\Gamma, D=\gamma_1+\cdots+\gamma_g\}
\eeq
where $\Gamma$ is the spectral curve of the operator $L$ and $D$ is the divisor of poles of the Bloch solution $\psi$ is usually referred  as {\it the direct spectral transform}. It is one-to-one correspondence of the open everywhere dense subsets of the spaces of operators and algebraic-geometrical spectral data. The construction of {\it the inverse} spectral transform is a particular case of the general construction of algebraic-geometrical solutions of the $2D$ Toda hierarchy which we now present.

\subsection{Algebraic-geometrical solutions of the $2D$ Toda lattice hierarchy}

Let $\G$ be a smooth genus $g$ algebraic curve with fixed local coordinates $z_\pm$ in the neighborhoods of two marked points $p_\pm \in \G,\ z_\pm(p_\pm)=0$, and let $t=\{t^\pm_j, \, j=1,2\ldots\}$ be a set of complex variables (it is assumed that only {\it finite} number of them are non-zero). Then, as shown in \cite{kr-2dtoda}:
\begin{lem}\label{lem:2dtoda}
For a generic set of $g$ points $\gamma_1, \ldots, \gamma_g$ there is a unique meromorphic function $\Psi_i(t,p), \ p\in \G$ such that: (i) away of the marked points $p_\pm$ it has at most simple poles at $\gamma_s$ (if $\gamma_s$ are distinct); (ii) in the neighborhoods of the marked points it has the form
\beq\label{psipm}
\Psi_i(t,z_\pm)=z_\pm^{\pm i} e^{(\sum_m t_m^{\pm}z_\pm^{-m})}\left(\sum_{s=1}^\infty \xi_s^\pm(i,t)\, \, z_\pm^s\right),\ \ \xi^-_0= 1.
\eeq
\end{lem}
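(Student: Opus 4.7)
The plan is to prove uniqueness via Riemann--Roch applied to an appropriate meromorphic quotient, and existence via the standard Krichever theta-function formula adapted to two marked points.

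\medskip

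\noindent\textbf{Uniqueness.} Suppose $\Psi$ and $\tilde\Psi$ both satisfy (i) and (ii). Form $f:=(\Psi-\tilde\Psi)/\tilde\Psi$. The normalization $\xi^-_0=1$ forces the leading $z_-^{-i}e^{\sum_m t_m^- z_-^{-m}}$ terms of $\Psi$ and $\tilde\Psi$ to agree, so the numerator near $p_-$ starts at order $z_-^{-i+1}e^{\sum_m t_m^- z_-^{-m}}$; after division by $\tilde\Psi$ the essential singularity cancels and $f$ has a zero at $p_-$. Near $p_+$ the exponential factors in numerator and denominator cancel likewise, giving a finite value. At each $\g_s$ the simple pole of the numerator cancels the simple pole of the denominator. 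Hence $f$ extends to a global meromorphic function on $\G$ whose only poles lie among the $g$ zeros $\tilde\g_1,\dots,\tilde\g_g$ of $\tilde\Psi$ on $\G\setminus\{p_\pm\}$ (generically distinct), and which vanishes at $p_-$. By Riemann--Roch the space of meromorphic functions associated with the non-special divisor $\tilde\g_1+\cdots+\tilde\g_g-p_-$ of degree $g-1$ is trivial for generic $\g_s$, forcing $f\equiv 0$.

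\medskip

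\noindent\textbf{Existence.} Let $\omega_0$ be the normalized Abelian differential of the third kind with residues $\mp 1$ at $p_\pm$, and $\omega_m^\pm$ the normalized differentials of the second kind with principal parts $d(z_\pm^{-m})$ at $p_\pm$ (all with vanishing $a$-periods). Let $U_0,U_m^\pm\in\bbC^g$ be the corresponding $b$-period vectors (divided by $2\pi i$), and let $\A\colon\G\to J(\G)$ denote the Abel map. Choose $V\in\bbC^g$ so that $\{p:\theta(\A(p)+V)=0\}=\{\g_1,\dots,\g_g\}$ (possible for generic $\g_s$ by Riemann's theorem), and set $W:=V+iU_0+\sum_m(t_m^+U_m^++t_m^-U_m^-)$. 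Define
\beq\label{BAexplicit}
\Psi_i(t,p)=\frac{\theta(\A(p)+W)\,\theta(\A(p_-)+V)}{\theta(\A(p)+V)\,\theta(\A(p_-)+W)}\,\exp\!\left(i\!\int_{p_0}^{p}\!\omega_0+\sum_m t_m^+\!\int_{p_0}^{p}\!\omega_m^++\sum_m t_m^-\!\int_{p_0}^{p}\!\omega_m^-\right).
\eeq
The zeros of $\theta(\A(p)+V)$ give the required simple poles at $\g_s$; the prescribed principal parts of $\omega_0,\omega_m^\pm$ yield, near $p_\pm$, the factor $z_\pm^{\pm i}e^{\sum_m t_m^\pm z_\pm^{-m}}$ times a holomorphic function; and the theta-quotient in (\ref{BAexplicit}) is manifestly equal to $1$ when $p=p_-$, producing the normalization $\xi^-_0=1$.

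\medskip

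The main obstacle is checking that (\ref{BAexplicit}) is single-valued on $\G$. Transporting $p$ around the $k$-th $b$-cycle multiplies the exponential factor by $\exp\bigl(2\pi i\,(iU_0+\sum_m(t_m^+U_m^++t_m^-U_m^-))_k\bigr)$ via the reciprocity law relating principal parts and $b$-periods of $\omega_0,\omega_m^\pm$; simultaneously, the quasi-periodicity $\theta(Z+\tau_k)=\exp(-\pi i\,\tau_{kk}-2\pi i Z_k)\,\theta(Z)$ applied to the two theta factors that depend on $\A(p)$ produces precisely the reciprocal factor (the diagonal $\tau_{kk}$ contributions from the numerator and denominator cancel, while the linear terms in $Z_k$ combine to give the required offset in $W-V$). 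This cancellation is the real content; the remaining verifications of pole locations and local expansions near $p_\pm$ are routine.
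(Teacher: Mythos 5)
Your argument is correct and is essentially the standard one that the paper relies on by citation to \cite{kr-2dtoda}: uniqueness via Riemann--Roch applied to the quotient of two candidate functions (whose essential singularities and prescribed poles cancel, leaving a function in $H^0$ of a generically non-special degree-$(g-1)$ divisor), and existence via exactly the theta-functional expression the paper records as formula (\ref{psiformula}), with single-valuedness checked through the quasi-periodicity of $\theta$ against the $b$-periods of the normalized Abelian integrals. Apart from a sign/convention mismatch in the exponents $z_\pm^{\pm i}$ that is already present in the paper's own statement versus its formula (\ref{constants}), there is nothing to add.
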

The function $\Psi_i$ is the particular case of , so-called, {\it two-point multi-variable BA function} (see the definition of the multi-point multivariable BA function in \cite{kr-shiot}). Uniqueness of it implies
\begin{theo} \cite{kr-2dtoda} Let $\Psi_i(t,p)$ by the BA function corresponding to any set of data above, i.e.
$\{\G,p_\pm,z_\pm;\gamma_1,\ldots, \gamma_g\, \}$. Then there exist unique operators $L_m^{\pm}$ of the form (\ref{LJ}, \ref{leading}) with $\varphi_i(t):=\ln \xi_0^+(t)$ such that equations (\ref{hierarch}) hold.
\end{theo}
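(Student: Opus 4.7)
\emph{Strategy.} The plan is the standard Krichever uniqueness argument. For each $m$ I construct the operator $L_m^-$ (and, symmetrically, $L_m^+$) by forcing the combination
\[
\Phi_i(t,p) := \p_{t_m^-}\Psi_i(t,p) - L_m^-\Psi_i(t,p)
\]
to share all the analytic properties of the BA function of Lemma \ref{lem:2dtoda}, but with vanishing leading coefficient at $p_-$. Uniqueness of the BA function then forces $\Phi_i \equiv 0$, which is exactly the first half of (\ref{hierarch}).

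\emph{Construction of $L_m^-$.} From (\ref{psipm}),
\[
\p_{t_m^-}\Psi_i = z_-^{i-m} e^{\sum_k t_k^- z_-^{-k}}\bigl(1 + \xi_1^-(i,t)\,z_- + \xi_2^-(i,t)\,z_-^2 + \cdots\bigr),
\]
so differentiation in $t_m^-$ shifts the leading order at $p_-$ by $z_-^{-m}$ while leaving the BA form at $p_+$ intact (the exponential $e^{\sum t_k^+ z_+^{-k}}$ does not depend on $t_m^-$). Writing $L_m^- = \sum_{j=0}^{m} a_{i,m}^{(j,-)} T^{-j}$ and stripping the common factor $e^{\sum_k t_k^- z_-^{-k}}$, I impose that the coefficients of $z_-^{i-m}, z_-^{i-m+1},\ldots, z_-^{i}$ in $L_m^-\Psi_i - \p_{t_m^-}\Psi_i$ all vanish. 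The top equation alone fixes $a_{i,m}^{(m,-)} = 1$, matching (\ref{leading}). The remaining $m$ equations form a triangular system in $a_{i,m}^{(m-1,-)},\ldots, a_{i,m}^{(0,-)}$: at order $z_-^{i-m+r}$ the coefficient of the unknown $a_{i,m}^{(m-r,-)}$ is $\xi_0^-(i-m+r,t) = 1$, so each is determined recursively and uniquely in terms of the previously chosen ones and known Taylor data of $\Psi$. By construction $\Phi_i$ is meromorphic off the marked points with at most simple poles at $\gamma_1,\ldots,\gamma_g$ (neither the shifts $i\mapsto i-j$ nor time differentiation create new poles), is of BA form at $p_+$ with leading order $z_+^{-i}$, and is of BA form at $p_-$ with vanishing leading term $\xi_0^- = 0$. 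Lemma \ref{lem:2dtoda} then yields $\Phi_i \equiv 0$.

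\emph{Construction of $L_m^+$.} The symmetric construction at $p_+$ produces $L_m^+ = \sum_{j=0}^m a_{i,m}^{(j,+)} T^j$ from the matching conditions at $p_+$. The top equation now reads $a_{i,m}^{(m,+)}\, \xi_0^+(i+m,t) = \xi_0^+(i,t)$, hence
\[
a_{i,m}^{(m,+)} = \frac{\xi_0^+(i,t)}{\xi_0^+(i+m,t)} = e^{\varphi_i - \varphi_{i+m}},
\]
in agreement with (\ref{leading}) and with the definition $\varphi_i := \ln \xi_0^+(i,t)$. The subsequent triangular system now has diagonal entries $\xi_0^+(i+m-r,t) = e^{\varphi_{i+m-r}} \neq 0$, so it again has a unique solution; the same uniqueness argument gives $\p_{t_m^+}\Psi = L_m^+ \Psi$.

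\emph{Uniqueness and the main obstacle.} Suppose $\tilde L_m^{\pm}$ is another operator of the form (\ref{LJ}, \ref{leading}) solving (\ref{hierarch}). Since its top coefficient is dictated by (\ref{leading}) and therefore coincides with that of $L_m^{\pm}$, the difference $D := L_m^{\pm} - \tilde L_m^{\pm}$ has order strictly less than $m$ and annihilates $\Psi_i(t,p)$ for all $p \in \G$. Reading the leading Laurent coefficient of $D\Psi_i$ at $p_{\mp}$ forces the top coefficient of $D$ to vanish; iterating drives the order down to $-\infty$ and yields $D \equiv 0$. The main technical obstacle I anticipate is checking that the linear systems in the two constructions above are truly triangular, i.e.\ that the relevant diagonal entries never vanish; this rests on the normalization $\xi_0^-\equiv 1$ at $p_-$ and on $\xi_0^+(i,t) \neq 0$ at $p_+$, the latter being guaranteed for generic divisor and generic times by the very definition $\varphi_i = \ln \xi_0^+$. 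With that technicality settled, the entire theorem reduces to the uniqueness clause of Lemma \ref{lem:2dtoda}.
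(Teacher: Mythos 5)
Your proposal is correct and is precisely the argument the paper intends: the paper itself only remarks that the theorem follows from the uniqueness clause of Lemma \ref{lem:2dtoda}, and you have filled in the standard triangular matching of Laurent coefficients at $p_\mp$ that makes $\p_{t_m^\pm}\Psi-L_m^\pm\Psi$ a BA-type function with vanishing leading term, hence zero. (Two cosmetic points: you have correctly read the exponents so that $\Psi_i\sim z_-^{\,i}$ at $p_-$ and $\Psi_i\sim e^{\varphi_i}z_+^{-i}$ at $p_+$, which is the convention forced by (\ref{LJ}), (\ref{leading}) and Lemmas \ref{L1}--\ref{L2} despite the sign in (\ref{psipm}); and your display for $\p_{t_m^-}\Psi_i$ omits the $\p_{t_m^-}\xi_s^-$ terms, which is harmless since they only enter at orders beyond the matching range $z_-^{\,i-m},\ldots,z_-^{\,i}$.)
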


\noindent
{\bf Remark.} By definition the BA function depends on a choice of local coordinates $z_\pm$ in the neighborhood of the marked points $p_\pm$. A change of local coordinate corresponds to triangular transformation of the times $t_m^{\pm}$ (compare with the remark in the Introduction).

\medskip

The algebraic-geometrical solutions of the $2D$ Toda hierarchy can be explicitly written in terms of the Riemann theta-function. Let us choose a basis of cycles $a_i,b_i, \, i=1,\ldots,g$ on $\G$ with the canonical matrix of intersections: $a_i\circ a_j=b_i\circ b_j=0, a_i\circ b_j=\delta_{ij}$. Then one can consequently define:

(a) a basis of the normalized holomorphic differentials $\omega_i,\, \oint_{a_j}\omega_i=\delta_{ij};$

(b) the matrix of their $b$-periods, $B_{ij}=\oint_{b_j}\omega_i$, and the corresponding Riemann theta function
$$\theta(z)=\theta(z|B)=\sum_{m\in \mathbb Z^g} e^{2\pi i (m,z)+\pi i (Bm,m)},\ \ z=z_1,\ldots, z_g;$$

(c) the Abel transform $A(p)$ that is a vector with coordinates $A_k(p)=\int^p \omega_k$;

(d) the normalized Abelian differential of the third kind $d\Omega_{0}, \oint_{a_i}d\Omega_0=0,$ having simple poles with residues $\pm 1$ at $p_{\pm}$, and the normalized abelian differentials of the second kind $d\Omega_{m,\pm}, \oint_{a_i}d\Omega_{m,\pm}=0,$ having poles at $p_\pm$ of the form $d\Omega_{m,\pm}=d(z_{\pm}^{-m}+O(z_{\pm}))$.

\begin{lem} \cite{kr-2dtoda}
The Baker-Akhiezer function is given by the formula
\beq\label{psiformula}
\Psi_i(t,p)=
\frac{\theta(A(p)+iU_0+\sum U_{m,\pm}t_{m}^{\pm}+Z)\, \theta(A(p_-)+Z)}
{\theta(A(p_-)+iU_0+\sum U_{m,\pm}t_{m}^{\pm}+Z)\,\theta(A(p)+Z)}\,
e^{i\Omega_0(p)+\sum t_m^\pm \Omega_{m,\pm}(p)}
\eeq
Here the sum is taken over all pairs of indices $(m,\pm)$ and:

a) $\Omega_0(p)$ and $\Omega_{m,\pm}(p)$ are the abelian integrals,
$
\Omega_0(p)=\int^p d\Omega_0,\ \Omega_{m,\pm}(p)=\int^p d\Omega_{m,\pm},
$
corresponding to the differentials introduced above and normalized by the condition that in the neighborhood of $p_-$ they have the form
$$\Omega_0(z_-)=\ln z_- +O(z_-),\ \Omega_{m,-}(z_-)=z_-^{-m}+O(z_-),\ \Omega_{m,+}(z_-)=O(z_-);$$

b) $2\pi i U_0$, $2\pi iU_{\a,j}$ are the vectors of their $b$-periods, i.e. vectors with the coordinates
\beq\label{uperiods}
U_0^k={1\over 2\pi i} \oint_{b_k} d\Omega_{0},\ U_{m,\pm}^k={1\over 2\pi i} \oint_{b_k} d\Omega_{m,\pm};
\eeq

c)$Z$ is an arbitrary vector (it corresponds to the divisor of poles of
Baker-Akhiezer function).
\end{lem}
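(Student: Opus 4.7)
The plan is to verify that the right-hand side of (\ref{psiformula}) satisfies the three defining properties listed in Lemma \ref{lem:2dtoda}, and then invoke uniqueness of the BA function. Denote the right-hand side by $\widetilde\Psi_i(t,p)$.

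First I would address the issue of single-valuedness. A priori $\widetilde\Psi_i$ is defined only on the universal cover of $\Gamma\setminus\{p_\pm,\gamma_s\}$, because it involves the Abel map $A(p)$ and the abelian integrals $\Omega_0(p),\Omega_{m,\pm}(p)$. Going around an $a_k$-cycle, $A(p)$ jumps by the unit vector $e_k$, and by construction the abelian differentials have vanishing $a$-periods, so $\Omega_0$ and $\Omega_{m,\pm}$ do not change; the quasi-periodicity $\theta(z+e_k)=\theta(z)$ shows $\widetilde\Psi_i$ is $a$-invariant. Going around a $b_k$-cycle, $A(p)$ jumps by the column $B_k$ of the period matrix, while the abelian integrals pick up their $b$-periods. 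Writing out the $\theta$ quasi-periodicity
$$\theta(z+B_k)=\exp\bigl(-\pi i B_{kk}-2\pi i z_k\bigr)\,\theta(z),$$
one sees that the resulting exponential factor coming from the numerator-over-denominator is $\exp(-2\pi i(iU_0^k+\sum t_m^\pm U_{m,\pm}^k))$, which by (\ref{uperiods}) exactly cancels the factor $\exp(i\oint_{b_k}d\Omega_0+\sum t_m^\pm\oint_{b_k}d\Omega_{m,\pm})$ coming from the exponential. Hence $\widetilde\Psi_i$ is a well-defined (meromorphic) function on $\Gamma$.

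Next I would check the pole structure. The denominator $\theta(A(p_-)+iU_0+\sum U_{m,\pm}t_m^\pm+Z)$ is a constant in $p$, and $\theta(A(p)+Z)$ has, by Riemann's vanishing theorem, exactly $g$ zeros $\gamma_1,\ldots,\gamma_g$ (for generic $Z$) whose positions are independent of $i$ and $t$. These produce the required simple poles away from $p_\pm$; the exponential factor is regular on $\Gamma\setminus\{p_\pm\}$, and the numerator is holomorphic, so no other poles appear.

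Then I would verify the prescribed asymptotics (\ref{psipm}) at the marked points. Near $p_-$, the identity $A(p_-)=A(p_-)$ makes the first and second theta factors in numerator and denominator cancel to leading order, giving the value $1$ at $p_-$; the exponential $e^{i\Omega_0(p)+\sum t_m^\pm\Omega_{m,\pm}(p)}$ behaves like $z_-^i\exp(\sum_m t_m^- z_-^{-m})$ by the normalizations of $\Omega_0,\Omega_{m,-},\Omega_{m,+}$ near $p_-$, yielding the expansion in (\ref{psipm}) with $\xi_0^-=1$. Near $p_+$, the residue of $d\Omega_0$ is $-1$ so $\Omega_0(p)\sim -\ln z_+$, and $\Omega_{m,+}(p)\sim z_+^{-m}$, so the exponential contributes $z_+^{-i}\exp(\sum_m t_m^+ z_+^{-m})$; the theta ratio is holomorphic and non-vanishing at $p_+$, giving precisely the required form.

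The main subtlety is the cancellation of monodromies around the $b$-cycles. Every other step is essentially bookkeeping with local expansions and Riemann's vanishing theorem, but the $b$-period computation is where the choice of normalization $\oint_{a_j}\omega_i=\delta_{ij}$, $\oint_{a_i}d\Omega_{m,\pm}=0$, and the definitions of $U_0$, $U_{m,\pm}$ in (\ref{uperiods}) must all fit together exactly. Once single-valuedness is established and the analytic properties match, uniqueness in Lemma \ref{lem:2dtoda} forces $\widetilde\Psi_i=\Psi_i$.
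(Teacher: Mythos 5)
The paper gives no proof of this lemma, citing \cite{kr-2dtoda}; your argument --- checking that the explicit theta-functional expression is single-valued on $\Gamma$ (cancellation of the $b$-cycle monodromies of the theta quotient against the $b$-periods of the abelian integrals, using (\ref{uperiods})), has its $i$- and $t$-independent pole divisor from the zeros of $\theta(A(p)+Z)$ by Riemann's vanishing theorem, has the prescribed exponential singularities at $p_\pm$ from the normalizations of $\Omega_0,\Omega_{m,\pm}$, and then invoking the uniqueness in Lemma \ref{lem:2dtoda} --- is precisely the standard proof from the cited reference, and it is complete. The only blemish is a sign in the bookkeeping at the marked points (you obtain $z_-^{i}$ at $p_-$ and $z_+^{-i}$ at $p_+$, whereas (\ref{psipm}) literally reads $z_\pm^{\pm i}$), but this discrepancy is already present in the paper's own conventions (compare (\ref{psipm}) with Lemmas \ref{L1}, \ref{L2} and (\ref{constants})), so it is not a gap in your argument.
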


Notice, that from the bilinear Riemann relations it follows that $U_0=A(p_-)-A(p_+)$. Then from the comparison of the evaluation of (\ref{psiformula}) one gets:
\begin{theo} \cite{kr-2dtoda} The algebraic-geometrical solutions of the $2D$ Toda lattice are given by the formula
\beq\label{formula}
\varphi_i(t)=\ln \frac{\theta ((i+1)U_0+\sum U_{m,\pm}t_{m}^{\pm}+\widetilde{Z})}{\theta (iU_0+\sum U_{m,\pm}t_{m}^{\pm}+\widetilde{Z})}+ic_0+
\sum c_{m,\pm}t_{m}^{\pm}
\eeq
where $\tilde Z=Z+A(p_+)$ is an arbitrary vector, the vectors $U_0$ and $U_{m,\pm}$ are defined in (\ref{uperiods}),
and the constants $c_0$ and $c_{m,\pm}$ are the leading coefficients of the expansions of the abelian integrals in the neighborhood of $p_+$:
\beq\label{constants}
\Omega_0(z_+)=-\ln z_+ +c_0+O(z_+),
\eeq
$$ \Omega_{m,-}(z_+)=c_{m,-}+O(z_+),\ \Omega_{m,+}(z_+)=z_+^{-m}+c_{m,+}+O(z_+)
$$
\end{theo}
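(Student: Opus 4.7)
The strategy is to evaluate the Baker--Akhiezer function formula (\ref{psiformula}) in a neighborhood of $p_+$, extract the leading coefficient $\xi_0^+(i,t)$, and apply the identification $\varphi_i(t)=\ln\xi_0^+(i,t)$ from the preceding theorem.

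First I would substitute the normalized expansions (\ref{constants}) of the abelian integrals into the exponential factor of (\ref{psiformula}). Since $d\Omega_0$ contributes $-\ln z_+$, the differentials $d\Omega_{m,+}$ contribute $z_+^{-m}$, and $d\Omega_{m,-}$ contributes only the constant $c_{m,-}$ at $p_+$, one finds
\[
e^{i\Omega_0(p)+\sum t_m^\pm \Omega_{m,\pm}(p)}\Big|_{p\to p_+}=z_+^{-i}\,e^{\sum_m t_m^+ z_+^{-m}}\,e^{ic_0+\sum c_{m,\pm}t_m^\pm}\bigl(1+O(z_+)\bigr).
\]
Matching against the local expansion (\ref{psipm}) of $\Psi_i$ at $p_+$ identifies $\xi_0^+(i,t)$ with the value of the theta-ratio in (\ref{psiformula}) at $p=p_+$, multiplied by $e^{ic_0+\sum c_{m,\pm}t_m^\pm}$.

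Next I would simplify the theta-ratio using the Riemann bilinear identity $U_0=A(p_-)-A(p_+)$, obtained by applying the standard bilinear relations to the third-kind differential $d\Omega_0$ against the normalized holomorphic basis $\omega_k$. With $\widetilde Z=Z+A(p_+)$, the $i$-dependent theta arguments become
\begin{align*}
A(p_+)+iU_0+\textstyle\sum U_{m,\pm}t_m^\pm+Z&=iU_0+\textstyle\sum U_{m,\pm}t_m^\pm+\widetilde Z,\\
A(p_-)+iU_0+\textstyle\sum U_{m,\pm}t_m^\pm+Z&=(i+1)U_0+\textstyle\sum U_{m,\pm}t_m^\pm+\widetilde Z,
\end{align*}
while the $i$- and $t$-independent factor $\theta(A(p_-)+Z)/\theta(A(p_+)+Z)$ is a constant that can be absorbed into the arbitrary shift $\widetilde Z$. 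Taking logarithms then yields (\ref{formula}).

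The main algebraic--geometric input is the Riemann bilinear identity that pins down $U_0$; the remainder is routine substitution of the local expansions. A useful internal consistency check, which also dictates the sign conventions, is that the same evaluation performed at $p_-$ collapses the theta-ratio in (\ref{psiformula}) to $1$ identically, recovering the normalization $\xi_0^-=1$ that went into the definition of the multi-variable Baker--Akhiezer function.
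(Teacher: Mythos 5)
Your proposal is correct and follows essentially the same route as the paper's own (one-line) argument: evaluate the theta-functional formula (\ref{psiformula}) at $p_+$, read off the leading coefficient $\xi_0^+(i,t)$, use $\varphi_i=\ln\xi_0^+$ from the preceding theorem, and invoke the bilinear-relation identity $U_0=A(p_-)-A(p_+)$ to rewrite the theta arguments in terms of $\widetilde Z=Z+A(p_+)$. The only imprecision is the claim that the $i$- and $t$-independent factor $\theta(A(p_-)+Z)/\theta(A(p_+)+Z)$ can be ``absorbed into $\widetilde Z$'': shifting $\widetilde Z$ alters the theta arguments rather than producing an overall constant, so this factor should instead be recognized as an irrelevant additive constant in $\varphi_i$ (the Toda equations involve only differences and derivatives of $\varphi$), which formula (\ref{formula}) simply omits.
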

From (\ref{formula}) it is easy to see than in general the algebraic-geometrical solution is {\it quasi-periodic} function of all the variables including $i$. It is $n$-periodic in the discrete variable $i$ if
the vector $nU_0=n(A(p_+)-A(p_-))$ is a vector in the lattice defining the Jacobian of the corresponding curve $\G$. The latter constraint is equivalent to the following:

\begin{lem} \label{lm:per} Let $\G$ be a smooth algebraic curve on which there is a meromorphic function $w$ with the only pole at some point $p_-$, and zero at another point $p_+$ of order $n$ equal to the order of its pole. Then the BA function corresponding to $\G, p_\pm$ and any divisor $\g_s$ satisfies the equation (\ref{bloch}), and therefore the corresponding solutions of $2D$ Toda hierarchy are $n$-periodic.
\end{lem}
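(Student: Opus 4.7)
The plan is to apply the uniqueness part of Lemma~\ref{lem:2dtoda} to the candidate $\wt\Psi_i(t,p):=w(p)\,\Psi_i(t,p)$. If $\wt\Psi_i$ turns out to have, up to a single nonzero scalar, the same defining analytic profile as the BA function with shifted index $i+n$, then uniqueness forces $\wt\Psi_i$ to be proportional to $\Psi_{i+n}$; this relation is exactly a Floquet-type identity of the form~(\ref{bloch}) and, by comparison of subleading data at $p_+$, yields $n$-periodicity of $\varphi_i=\ln\xi_0^+(i,t)$.

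First I would inspect $\wt\Psi_i$ away from the marked points: by hypothesis $w$ has its only pole at $p_-$, so $\wt\Psi_i$ is regular wherever $\Psi_i$ is and carries the same pole divisor $\gamma_1+\cdots+\gamma_g$ — which is also the pole divisor required of $\Psi_{i+n}$. Next I would compare the expansions at $p_\pm$. Pick local coordinates in which $w(z_-)=c_-\,z_-^{-n}(1+O(z_-))$ and $w(z_+)=c_+\,z_+^{\,n}(1+O(z_+))$ with $c_\pm\neq 0$. Multiplying these by the expansions~(\ref{psipm}) gives
\[
\wt\Psi_i(t,z_-)=c_-\,z_-^{-(i+n)}\,e^{\sum_m t_m^-z_-^{-m}}\,(1+O(z_-)),
\]
\[
\wt\Psi_i(t,z_+)=c_+\,\xi_0^+(i,t)\,z_+^{\,i+n}\,e^{\sum_m t_m^+z_+^{-m}}\,(1+O(z_+)),
\]
which are the leading shapes of $\Psi_{i+n}$ at $p_\pm$ up to overall scalars. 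Hence Lemma~\ref{lem:2dtoda} yields the identity $w(p)\,\Psi_i(t,p)=c_-\,\Psi_{i+n}(t,p)$; rewritten in shift-operator notation this is~(\ref{bloch}) with the role of $w$ played by $c_-\,w(p)^{-1}$, the inversion merely reflecting the opposite sign convention between the Bloch normalization of Section~2.1 and the Baker--Akhiezer normalization of Section~2.2.

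Finally, to read off $n$-periodicity I would equate the leading coefficients at $p_+$ on the two sides of $w\,\Psi_i=c_-\,\Psi_{i+n}$. This gives $\xi_0^+(i+n,t)=(c_+/c_-)\,\xi_0^+(i,t)$, so $\varphi_{i+n}(t)=\varphi_i(t)+\ln(c_+/c_-)$. Since~(\ref{2dtoda}) and the whole hierarchy are invariant under the trivial gauge $\varphi_i\mapsto\varphi_i+ai+b$, the linear drift in $i$ is inessential and the corresponding solution is $n$-periodic. The main — essentially the only substantive — obstacle is the bookkeeping of the constants $c_\pm$ coming from the freedom in the local coordinates $z_\pm$ and the translation between the two Floquet sign conventions; on the theta-function side of~(\ref{formula}) the same statement is just Abel's theorem, equating the existence of a meromorphic $w$ with divisor $n(p_+-p_-)$ with the condition $nU_0\in$ the period lattice of $\Gamma$, which makes~(\ref{formula}) manifestly $n$-periodic in $i$.
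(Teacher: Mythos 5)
Your proof is correct and is essentially the paper's own argument: the paper disposes of the lemma in one line by checking that $w\Psi_i$ and the index-shifted BA function have the same analytic properties (pole divisor $\gamma_1+\cdots+\gamma_g$ plus the exponential expansions at $p_\pm$) and invoking the uniqueness of Lemma \ref{lem:2dtoda}. The only divergence --- your shift is $i+n$ where the paper writes $i-n$ --- traces back to the paper's own mismatch between the exponents $z_\pm^{\pm i}$ in (\ref{psipm}) and the Bloch convention of (\ref{bloch}), which you correctly identify and reconcile, and your bookkeeping of the constants $c_\pm$ and of the resulting constant increment $\varphi_{i+n}-\varphi_i$ (which cancels in all differences $\varphi_i-\varphi_{i+m}$ entering the operators, consistently with the $ic_0$ term in (\ref{formula})) is if anything more careful than the original.
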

For the proof of the statement it is enough to check that the functions $\Psi_{i-n}$ and $w\Psi_n$ have the same analytical properties and hence coincide.

\medskip
\subsection{The dual Baker-Akhiezer function}

For further use, recall also a notion of the dual Baker-Akhiezer function (see details in \cite{kr-shiot}). First for a non-special degree $g$ divisor $D=\g_1+\cdots+\g_g$ on a smooth genus $g$ algebraic curve $\G$ with two marked points one can define a {\it dual} degree $g$ effective divisor $D^+=\g_1^++\cdots+\g_g^+$ as follows: for a given $D$ there exists a unique meromorphic differential $d\Omega$ with simple poles with resides $\pm 1$ at the marked points that is holomorphic everywhere else, and which has zeros at $\g_s$, $d\Omega(\g_s)=0$. The zero divisor of $d\Omega$ is of degree $2g$. Hence, besides of $\g_s$ the differential $d\Omega$ has zeros at $g$ other points $\g_s^+$, i.e. $d\Omega(\g_s^+)=0$. In other words, the divisor $D^+$ is defined by the equation $D+D^+={\mathcal K}+p_++p_-\in J(\G)$ where $\mathcal K$ is the canonical class, i.e. the equivalence class of the zero divisor of a holomorphic differential on $\G$.

If the BA function $\Psi_i(t,p)$ is defined by the divisor $D$ then, its dual function $\Psi_i^+(t,p)$ is uniquely defined by the following analytical properties: (i) away of the marked points $p_\pm$ it is meromorphic and has at most simple poles at $\gamma_s^+$ (if $\gamma_s^+$ are distinct); (ii) in the neighborhoods of the marked points it has the form
\beq\label{psipm+}
\Psi_i^+(t,z_\pm)=z_\pm^{\pm i} e^{-(\sum_m t_m^{\pm}z_\pm^{-m})}\left(\sum_{s=1}^\infty \chi_s^\pm(i,t)\, \, z_\pm^s\right),\ \ \chi^-_0= 1.
\eeq
It is easy to see that the differential $\Psi_i^+\Psi_j d\Omega$ is meromorphic with the only possible poles at $p_\pm$. Moreover, for $i>j$ ($i<j$) it is holomorphic at $p_+$ ($p_-$). Since the sum of residues of a meromorphic differential equals zero, we get the equations
\beq\label{residues}
{\rm res}_{p_\pm} \Psi^+_i\Psi_j\,d\Omega=\pm \delta_{i,j}
\eeq
which imply that $\Psi^+$ satisfies the adjoint equations
\beq\label{adjoint}
(\Psi^{+}L)_{i}\equiv\Psi_{i+k+1}^+ +a_{i+k}^{(k)}\Psi_{k}^+ +\ldots a_{i+1}^{(1)}\Psi_{i+1}^+ =E\psi_{i}
\eeq
and
\beq\label{adjoint1}
-\p_{t_m^{\pm}}\Psi^+=\Psi^+L_m^{\pm}
\eeq
The theta-functional formula (\ref{formula}) for the dual BA function takes the form:
\beq\label{psiformula+}
\psi_i^+(t,p)=
\frac{\theta(A(p)-iU_0-\sum U_{m,\pm}t_{m}^{\pm}+Z^+)\, \theta(A(p_-)+Z^+)}
{\theta(A(p_-)-iU_0-\sum U_{m,\pm}t_{m}^{\pm}+Z^+)\,\theta(A(p)+Z^+)}\,
\times
\eeq
$$\times e^{-i\Omega_0(p)-\sum t_m^\pm \Omega_{m,\pm}(p)}$$
where $Z+Z^+={\mathcal K}+A(p_+)+A(p_-)$.

From the analytical properties of $\Psi^+$ it easy follows that:
\begin{lem}\label{lm:adjper} Under the assumptions of Lemma \ref{lm:per} the dual BA function satisfies the equation
\beq\label{adbloch}
\Psi^+_i=w\Psi^+_{i-n}
\eeq
\end{lem}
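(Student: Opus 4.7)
The plan is to follow the same uniqueness argument indicated for Lemma~\ref{lm:per}: verify that the function $w\,\Psi^+_{i-n}(t,p)$ possesses all the analytic properties that uniquely determine the dual Baker-Akhiezer function $\Psi^+_i(t,p)$, and then conclude equality by uniqueness. On $\G\setminus\{p_\pm\}$, the function $w$ is holomorphic and $\Psi^+_{i-n}$ has at most simple poles at the points $\g_s^+$ of the dual divisor; consequently $w\,\Psi^+_{i-n}$ is meromorphic on $\G\setminus\{p_\pm\}$ with at most simple poles at the same divisor, matching the first analytic condition in the definition of $\Psi^+_i$.

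At the two marked points, multiplication by $w$ (pole of order $n$ at $p_-$, zero of order $n$ at $p_+$) shifts the leading $z_\pm$-exponent in~(\ref{psipm+}) by exactly the amount that converts the expansion of $\Psi^+_{i-n}$ into that of $\Psi^+_i$. Normalising the local coordinate $z_-$ so that $w=z_-^{-n}(1+O(z_-))$ at $p_-$---the same convention used in the proof of Lemma~\ref{lm:per}---one computes
\[
w\,\Psi^+_{i-n}(t,z_-) = z_-^{-i}\,e^{-\sum_m t_m^- z_-^{-m}}\bigl(1+O(z_-)\bigr),
\]
which coincides with the expansion of $\Psi^+_i$ at $p_-$ and, crucially, realises the required normalisation $\chi^-_0=1$. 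The parallel computation at $p_+$ reproduces the correct leading exponent $z_+^{i}$ and exponential factor; since no normalisation is prescribed at $p_+$, nothing further needs to be checked there.

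Once all three analytic conditions defining $\Psi^+_i$ have been verified for $w\,\Psi^+_{i-n}$, uniqueness of the dual BA function forces $\Psi^+_i=w\,\Psi^+_{i-n}$, which is~(\ref{adbloch}). The only delicate step is the calibration of the leading coefficient of $w$ at $p_-$ against $\chi^-_0=1$, exactly the convention already in force in the proof of Lemma~\ref{lm:per}, so no genuine obstacle appears. A more intrinsic alternative would be to substitute Lemma~\ref{lm:per}'s Bloch identity $\Psi_j = w^{-1}\Psi_{j-n}$ into the residue relation~(\ref{residues}) and reindex to obtain $\res_{p_\pm}(w^{-1}\Psi^+_i)\,\Psi_j\,d\Omega = \pm\,\delta_{i-n,j}$, whence $w^{-1}\Psi^+_i$ and $\Psi^+_{i-n}$ pair identically against the BA family and therefore coincide.
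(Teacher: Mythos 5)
Your proof is correct and is essentially the paper's own argument: the paper disposes of this lemma with the single remark that it "easily follows from the analytical properties of $\Psi^+$," i.e.\ exactly the uniqueness comparison of $w\Psi^+_{i-n}$ with $\Psi^+_i$ that you carry out (mirroring the stated proof of Lemma~\ref{lm:per}). Your attention to the calibration $w=z_-^{-n}(1+O(z_-))$, which the paper fixes in (\ref{Eexpansion}), is the right point to single out.
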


\noindent
{\bf Important remark}.
As it was already mentioned above, in the particular case the construction of the algebraic-geometrical solutions of the $2D$ Toda hierarchy can be seen as the construction of the inverse spectral transform. Indeed, let $\G$ be a curve defined by an equation of the form (\ref{spectralcurve}), then a simple comparison of the analytic properties shows that the Bloch function of the operator $L$ coincides with the evaluation of the multivariable BA function at the zero value of all continuous times: $\psi_i=\Psi_i(t_k^\pm=0)$

\section{The Hamiltonian theory of the reduced systems}

The systems in question, namely equations (\ref{LT1}) and (\ref{LT2}) were defined as a special reduction of the $2D$ Toda hierarchy. Therefore, the formulae (\ref{psiformula}), where the Riemann theta-function corresponds to any curve defined by equation(\ref{spectralcurve}) provides solutions to our reduction of the $2D$ Toda hierarchy.  In this section we develop Hamiltonian theory of this reduced system following the general scheme
proposed in \cite{kp1,Kr2}. According to that scheme on {\it the space of operators $L$} which is identified with a phase space of the system, one can define a family of two-forms by the formula
\beq\label{forms}\omega^{(i)}=-\frac{1}{2}\sum_{\a} \res_{p_\a} E^{-i}\langle\psi^+(w)\,\delta L\wedge \delta \psi(w)\rangle \,{d\Omega}
\eeq
where $\delta F(L)$ stands for the differential of a function $F$ on the space of the operators (the BA function with fixed eigenvalue $w$ and fixed normalization is such a function), and the sum is taken over the set of points $p_\a$ on the corresponding spectral curve where the expression in the right hand side a'priory has poles: namely at the marked point $p_{\pm}$, where the BA function and its dual have poles, and for $i>0$ at the zeros $p_\ell, \, \ell=1,\ldots, k$, of the function $E=E(p)$ where $w=w(p)$ does not vanish, i.e. $E(p_\ell)=0, \, w(p_\ell)\neq 0$.

\medskip
\subsection{The differential $d\Omega$}
Our  first goal is to derive a closed expression for $d\Omega$ introduced above by its analytic properties in terms of Bloch eigenfunction of the operator $L$ and its adjoint one.

Let us assume first that the coefficients of the operator are $n$-periodic. Following a line of arguments in \cite{Kr3} consider the differential $d\psi$ with respect to the spectral variable. It satisfies the non homogeneous  linear equation
\beq\label{var}
(L-E)\,d\psi=dE\psi
\eeq
which is just the differential of equation (\ref{eigen}). Taking the differential of equation (\ref{bloch}) we get that $d\psi$ satisfies the following monodromy relation
\beq\label{blochd}
wd\psi_i+dw \psi_i=d\psi_{i-n}
\eeq
For brevity let us denote the average of a function $f_i$ over the interval $l+1\leq i \leq l+n$ by
$\langle f\rangle_l:=\frac 1n \sum_{i=l+1}^{l+n} f_i$ and write $\langle f \rangle $ when that average does not depend on $l$, as in the case of $n$-periodic functions.

From (\ref{var}) it follows that
\beq\label{1}
E\langle\psi^+\,d\psi\rangle_l+dE\,\langle\psi^{+}\psi\rangle=\langle\psi^{+}(Ld\psi)\rangle_l=
\sum_{j=1}^{k+1}\sum_{i=l+1}^{l+n}
a_i^{(j)}\psi_i^+d\psi_{i-j}
\eeq
Equation (\ref{adjoint}) implies
\beq\label{2}
E\langle\psi^+\,d\psi\rangle_l=\sum_{j=1}^{k+1}\sum_{i=l+1}^{l+n}
a_{i+j}^{(j)}\psi_{i+j}^+d\psi_{i}=\sum_{j=1}^{k+1}\sum_{i=l+1+j}^{l+n+j}
a_i^{(j)}\psi_i^+d\psi_{i-j}
\eeq
Subtracting (\ref{2}) from (\ref{1}) and using (\ref{blochd}) we get
\beq\label{3}
dE\,\langle\psi^{+}\psi\rangle=\frac{dw}{nw}\sum_{j=1}^{k+1}\sum_{i=l+1}^{l+j}a_i^{(j)}\psi_i^+\psi_{i-j}
\eeq
Notice that the left hand side of (\ref{3}) does not depend on $l$. Hence, the right hand side of (\ref{3}) is also $l$-independent. Taking the average of the right hand side of (\ref{3}) in $l$ we obtain the equation
\beq\label{decendent}
dE\,\langle\psi^{+}\psi\rangle=\frac{dw}{nw}\langle \psi^+(L^{(1)}\psi)\rangle
\eeq
where
\beq\label{decendent1}
L^{(1)}:=\sum_{j=1}^{k+1}j a_i^{(j\,)}T^{-j}
\eeq
is the difference analog of the first descendent of a differential operator introduced in \cite{Kr3}.

From (\ref{decendent}) it follows that the zeroes of $dw$ coincide with the zeroes of the meromorphic function $\langle\psi^{+}\psi\rangle$ and the zeros of $dE$ coincide with the zeros of $\langle \psi^+(L^{(1)}\psi)\rangle$. Hence:
\begin{lem} The differential
\beq\label{form1}
d\Omega:=\frac{dw}{nw\langle\psi^{+}\psi\rangle}=\frac{dE}{\langle\psi^{+}(L^{(1)}\psi)\rangle}.
\eeq
is holomorphic away of the marked points $p_\pm$, and has zeros at
the poles of $\psi$ and $\psi^+$; at $p_\pm$ it has simple poles with resides $\pm 1$,
i.e $d\Omega$ is the differential introduced above in the definition of the dual BA function.
\end{lem}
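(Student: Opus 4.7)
The plan is to take the identity (\ref{decendent}), derived in the display preceding the statement, as already guaranteeing that the two formulae in (\ref{form1}) define a single meromorphic differential $d\Omega$ on $\Gamma$; what remains is to identify its divisor and residues.

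First I would verify the residue at $p_-$ via the first formula. In the local coordinate $z=z_-$ normalized by $w=z^{-n}$ one has $dw/(nw)=-dz/z$. Lemma~\ref{L1} gives $\psi_i=z^i(1+O(z))$ with leading coefficient $1$, and the dual BA function is normalized so that $\psi_i^+=z^{-i}(1+O(z))$ at $p_-$ with leading coefficient $1$. Hence $\psi_i^+\psi_i=1+O(z)$ for every $i$, so $\langle\psi^+\psi\rangle(p_-)=1$, and $d\Omega$ has a simple pole with residue $-1$ at $p_-$.

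Next I would locate the $2g$ zeros of $d\Omega$. At a pole $\gamma_s$ of $\psi$, which for a generic divisor is disjoint from the poles $\gamma_s^+$ of $\psi^+$, each component $\psi_i^+\psi_i$ has a simple pole and the residues do not cancel under the averaging $\tfrac1n\sum_i$, so $\langle\psi^+\psi\rangle$ has a simple pole at $\gamma_s$. Since $dw/(nw)$ is regular and nonvanishing there, the first formula in (\ref{form1}) shows $d\Omega$ has a simple zero at $\gamma_s$. Symmetrically, using the second formula (with $L^{(1)}\psi$ inheriting the simple poles of $\psi$), $d\Omega$ has a simple zero at each $\gamma_s^+$.

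Finally, a divisor count completes the picture. Any apparent singularity of the first expression off $\{p_\pm,\gamma_s,\gamma_s^+\}$ would come from a zero of $\langle\psi^+\psi\rangle$ unmatched by a zero of $dw$, but (\ref{decendent}) forces such cancellations; the symmetric statement applies to the second expression. Hence $d\Omega$ is holomorphic off $\{p_\pm\}$. The canonical bundle on $\Gamma$ has degree $2g-2$, and the $2g$ simple zeros at $\gamma_s,\gamma_s^+$ together with the simple pole at $p_-$ account for degree $2g-1$, so exactly one additional simple pole remains; it necessarily lies at $p_+$, and the residue theorem on $\Gamma$ forces its residue to equal $+1$. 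The divisor and residue data of $d\Omega$ thus match those of the differential introduced in Section~2.3 for the dual BA function, and uniqueness identifies the two.

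The main obstacle I foresee is excluding extraneous poles off $\{p_\pm\}$ in the last step; the cleanest resolution uses both formulae in tandem, since any apparent pole of one is compensated by a vanishing of its numerator by virtue of (\ref{decendent}).
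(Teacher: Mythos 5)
Your proposal follows the same route as the paper: the paper's entire proof of this lemma is the derivation of identity (\ref{decendent}) in the displays immediately preceding it, followed by the one-line assertion that the zeros of $dw$ coincide with the zeros of $\langle\psi^{+}\psi\rangle$ and the zeros of $dE$ with those of $\langle\psi^{+}(L^{(1)}\psi)\rangle$, whence the stated analytic properties. Your residue computation at $p_-$ and the identification of the zeros at the poles of $\psi$ and $\psi^+$ make explicit what the paper leaves implicit, and your handling of $p_+$ by a degree count plus the residue theorem is a sensible alternative to expanding $\psi^+$ at $p_+$ (where its leading coefficient $\chi_0^+(i)$ is not normalized to $1$), so nothing is lost there.

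The one point where you attempt to go beyond the paper --- excluding extraneous poles away from $p_\pm$ --- is where your mechanism fails. At a point $q\notin\{p_\pm,\gamma_s,\gamma_s^+\}$ with $\langle\psi^{+}\psi\rangle(q)=0$, identity (\ref{decendent}) yields only
\begin{equation*}
\frac{dw}{nw}(q)\,\langle\psi^{+}(L^{(1)}\psi)\rangle(q)=0,
\end{equation*}
and the alternative $\langle\psi^{+}(L^{(1)}\psi)\rangle(q)=0$ is not excluded; in that case \emph{both} expressions in (\ref{form1}) are singular at $q$ simultaneously, so using the two formulae ``in tandem'' proves nothing. The fact actually needed --- and merely asserted in the paper --- is that the zeros of $\langle\psi^{+}\psi\rangle$ are precisely the zeros of $dw$, i.e.\ the ramification points of the $n$-sheeted covering $w:\Gamma\to\mathbb{P}^1$. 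The clean argument runs in the opposite direction from yours: at a branch point two eigenvalues $E$ of the $n\times n$ matrix $L(w)$ collide, the right and left eigenvectors of the merging branches pair to zero, so every zero of $dw$ is a zero of the periodic pairing $\langle\psi^{+}\psi\rangle$; since $\langle\psi^{+}\psi\rangle$ has exactly $2g$ poles (at $D+D^+$) and hence $2g$ zeros, while $d\ln w$ (with divisor of degree $2g-2$ and two simple poles at $p_\pm$) has exactly $2g$ zeros, the inclusion is an equality. Note also that your final degree count tacitly assumes $d\Omega$ has no zeros beyond $\gamma_s,\gamma_s^+$, which is the same unproved point in another guise. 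So the proof is essentially the paper's, but the sentence claiming that (\ref{decendent}) ``forces such cancellations'' should be replaced by the eigenvector-degeneracy and degree-count argument above.
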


\medskip
\noindent
{\bf Example:} $k=1$
\beq\label{d2}
d\Omega=\frac{dE}{\langle a_i^{(1)}\psi_{i}^+\psi_{i-1}^{+}+2\psi_{i}^+\psi_{i-2}\rangle}=\frac{dw}{nw\langle\psi^{+}\psi\rangle}
\eeq

\medskip
\noindent
{\bf Example:} $k=2$
\beq\label{d2}
d\Omega=\frac{dE}{\langle a_i^{(1)}\psi_{i}^+\psi_{i-1}^{+}+2a_i^{(2)}\psi_{i}^+\psi_{i-2}+3\psi_i^+\psi_{i-3}\rangle}=\frac{dw}{nw\langle\psi^{+}\psi\rangle}
\eeq

\subsection {The symplectic leaves and the Darboux coordinates}

It is necessary to emphasize that the form $\omega^{(i)}$ is not closed and is degenerate on the space of {\it all} the operators $L$. It becomes closed after restriction onto certain subvarieties. As we shall see below, only the forms $\omega^{(0)}$ and $\omega^{(1)}$ are non-degenerate on the corresponding subvariety. That allows to regard the total space of operators $L$ as a {\it Poisson} manifold foliated by symplectic leaves of two types. The existence of these two types of foliation reflects, the so-called, bi-Hamiltonian origin of integrable systems.

The constrains defining the symplectic leaves are equivalent to the condition that the form $\omega^{(i)}$ does not depend on a choice of the normalization of the Bloch eigenvector $\psi$. The change of normalization is equivalent to the transformation $\psi_i \to \psi_i h,\ \psi_i^+\to \psi_i^+ h^{-1}$, where $h=h(w)$ is a scalar function. Under this transformation the differential in the right hand side of (\ref{forms}) gets transformed to the differential
\beq\label{form-trans}
E^{-i}\langle\psi^+(w)\,\delta L\wedge \delta \psi(w)\rangle \,{d\Omega}+
E^{-i}\langle\psi^+(w)\,\delta L \psi(w)\rangle \wedge \delta \ln h\,{d\Omega}
\eeq
Hence, the form $\omega^{(i)}$ is {\it normalization} independent only when the last term in (\ref{form-trans}) is holomorphic near the points $p_\a$. From the equation
\beq\label{varE}
(L-E)\delta \psi(w)=-(\delta L-\delta E(w))\psi
\eeq
and the definition of the adjoint operator it follows that
\beq\label{varE1}
\langle \psi^+((\delta L-\delta E)\psi)\rangle=\langle(\psi^+(E-L)) \delta \psi\rangle=0
\eeq
Then using (\ref{form1}) we obtain the following statement:
\begin{lem} \label{leaves}
The restriction of the form  $\omega^{(i)}$ given by (\ref{forms}) onto a subvariety of the space of all operators
such that on  this subvariety the differential $E^{-i} \delta E(w) d\ln w$ is holomorphic in the neighborhoods of points $p_\a$ is normalization independent.
\end{lem}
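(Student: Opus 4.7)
The plan is to take the extra term in \eqref{form-trans} and collapse it to an explicit expression in terms of $\delta E(w)$ and $d\ln w$, at which point the hypothesis of the lemma applies directly.

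Start from the observation, already stated as \eqref{form-trans}, that under the rescaling $\psi\to\psi h$, $\psi^+\to\psi^+h^{-1}$ with $h=h(w)$ a scalar function which may depend on $L$, the integrand acquires the extra summand
$$E^{-i}\langle\psi^+(w)\,\delta L\,\psi(w)\rangle\wedge\delta\ln h\ d\Omega.$$
Now apply \eqref{varE1}: since $(L-E)\delta\psi=-(\delta L-\delta E)\psi$ and $\psi^+$ is a formal left eigenvector for $L$ with eigenvalue $E$, one has $\langle\psi^+(\delta L-\delta E)\psi\rangle=0$, i.e.\ $\langle\psi^+\,\delta L\,\psi\rangle=\delta E(w)\,\langle\psi^+\psi\rangle$.

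Next, use the closed expression for the differential $d\Omega$ supplied by \eqref{form1}, namely $d\Omega=\dfrac{dw}{nw\langle\psi^+\psi\rangle}$, to obtain the identity
$$\langle\psi^+\,\delta L\,\psi\rangle\,d\Omega=\tfrac{1}{n}\,\delta E(w)\,d\ln w.$$
The factor $\langle\psi^+\psi\rangle$ cancels, and the extra term in the integrand becomes
$$\tfrac{1}{n}\,E^{-i}\,\delta E(w)\wedge\delta\ln h\,d\ln w.$$
Thus the total normalization anomaly of $\omega^{(i)}$ is
$$-\tfrac{1}{2n}\sum_{\a}\res_{p_\a}\bigl(E^{-i}\,\delta E(w)\,d\ln w\bigr)\wedge\delta\ln h .$$

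Finally, by hypothesis the differential-valued one-form $E^{-i}\,\delta E(w)\,d\ln w$ is holomorphic at each $p_\a$, while admissible rescalings $h(w)$ are holomorphic and non-vanishing at these marked points, so $\delta\ln h$ is holomorphic at $p_\a$ as well. Hence every residue above vanishes and $\omega^{(i)}|_{\rm subvar.}$ is independent of the choice of normalization of $\psi$, as claimed. The only real point requiring care is verifying that the chain of substitutions $\langle\psi^+\delta L\psi\rangle\mapsto\delta E\langle\psi^+\psi\rangle\mapsto\frac{1}{n}\delta E\,d\ln w/d\Omega^{-1}$ is valid at each pole $p_\a$, i.e.\ that the intermediate expressions do not hide extra singularities at the zeros $p_\ell$ of $E$ (for $i>0$) or at $p_{\pm}$; this is a direct consequence of the stated analyticity of $\psi,\psi^+,d\Omega$ established in Section~2.
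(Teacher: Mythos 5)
Your proof is correct and follows essentially the same route as the paper: you use the variational identity \eqref{varE1} to replace $\langle\psi^+\delta L\,\psi\rangle$ by $\delta E(w)\langle\psi^+\psi\rangle$, cancel $\langle\psi^+\psi\rangle$ against the one in the closed formula \eqref{form1} for $d\Omega$, and conclude that the normalization anomaly is a sum of residues of $\tfrac1n E^{-i}\delta E(w)\,d\ln w\wedge\delta\ln h$, which vanish under the stated holomorphy hypothesis. Your explicit remarks that $\delta\ln h$ must be regular at the $p_\a$ and that no hidden singularities arise in the substitution chain are points the paper leaves implicit, but the argument is the same.
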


\medskip
\noindent
{\bf Example $i=0$}. For $i=0$ the sum in (\ref{forms}) is taken over the marked points $p_\pm$ only. At the point $p_+$ (where $w=0$) the function $E$ has zero. Hence, the form $Ed\ln w$ has the only pole at $p_-$. Hence, it has no residue at $p_-$. Therefore for any operator $L$, the corresponding coefficient of (\ref{Eexpansion}) vanishes. Namely, $e_{k+1}\equiv 0$.

In the neighborhood of  $p_-$ where the function $E$ has pole of order $(k+1)$ the form $\delta E(w)d\ln w$ has pole of order $(k+2)$ with no residue.  Hence, if for any set $c=(c_1,\ldots,c_k)$ of constants we define $\Lambda_0^c$ as the subvariety of
operators $L$ satisfying the constraints
\beq\label{leaves0}
\Lambda_0^c:=\{L\in \Lambda_0^c| \,e_s(L)=c_s, s=1,\ldots k\}\,
\eeq
where $e_s=e_s(L)$ are the coefficients of expansion (\ref{Eexpansion}), then:
\begin{cor}
The form $\omega^{(0)}$ restricted to the subvariety $\Lambda_0^c$
is normalization independent.
\end{cor}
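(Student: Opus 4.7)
The plan is to apply Lemma \ref{leaves} with $i=0$, which reduces the corollary to verifying that on $\Lambda_0^c$ the differential $\delta E(w)\, d\ln w$ is holomorphic in neighborhoods of both marked points $p_\pm$. I would treat the two points in turn: $p_+$ should impose no constraint on $L$, while $p_-$ is precisely what forces the defining constraints of $\Lambda_0^c$.

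At $p_+$ I would take a local coordinate $z_+$ normalized so that $w=z_+^n$ and invert expansion (\ref{wzero}) to obtain $E=r_{1,0}^{1/n}\,z_+\bigl(1+O(z_+)\bigr)$. Varying $L$ at fixed $w$ (equivalently, fixed $z_+$) then yields $\delta E(w)=O(z_+)$; combining with $d\ln w=n\,dz_+/z_+$, which has only a simple pole at $p_+$, shows that $\delta E(w)\,d\ln w$ is automatically holomorphic at $p_+$ for every operator $L$, with no extra condition needed.

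At $p_-$ I would use the coordinate $z_-$ with $w=z_-^{-n}$ together with expansion (\ref{Eexpansion}) to compute
\beq
\delta E(w)\, d\ln w\;=\;-n\sum_{s\ge 1}\delta e_s\, z_-^{s-k-2}\,dz_-,
\eeq
so that the principal part at $p_-$ is carried by $\delta e_1,\ldots,\delta e_{k+1}$. The top coefficient $\delta e_{k+1}$ vanishes automatically because $e_{k+1}\equiv 0$ on the whole space of operators, a fact already established above from the residue theorem applied to $E\,d\ln w$ (whose only pole is at $p_-$). Killing the remaining principal-part coefficients is equivalent to $\delta e_s=0$ for $s=1,\ldots,k$, which by definition cuts out $\Lambda_0^c$. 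Hence on $\Lambda_0^c$ the hypothesis of Lemma \ref{leaves} is satisfied and the corollary follows. The only point requiring care is the exact bookkeeping of pole and zero orders of $E$, $w$ and their variations at $p_\pm$, which is rigidly dictated by the form (\ref{spectralcurve}) of the spectral curve and demands no further analytic input.
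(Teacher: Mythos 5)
Your proof is correct and follows essentially the same route as the paper: reduce to Lemma \ref{leaves} with $i=0$, use the residue theorem on $E\,d\ln w$ to get $e_{k+1}\equiv 0$, and observe that the remaining principal-part coefficients $\delta e_1,\ldots,\delta e_k$ of $\delta E(w)\,d\ln w$ at $p_-$ vanish exactly on $\Lambda_0^c$. Your explicit verification of holomorphy of $\delta E(w)\,d\ln w$ at $p_+$ is a small point the paper leaves implicit, but the argument is the same.
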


\medskip
\noindent
{\bf Example $i=1$}. The $E^{-1}\delta E(w) d\ln w$ is holomorphic at the marked point $p_-$. Since,
the sum of its residues equals zero, it is holomorphic at the point $p_+$ if it is holomorphic at the points $p_\ell, \, \ell=1\ldots, k$. Using the chain rule we get that the variation of $E(w)$ with fixed $w$ is related with the variation of $w(E)$ with fixed $E$ by the formula $\delta E(w)\,dw+\delta w(E)\,dE=0$. Hence, $\delta \ln E(w)\,d\ln w$ is holomorphic at $p_\ell$ (the preimages of $E=0$ where $w\neq 0$) if the equations $\delta w(p_\ell)=0$ are satisfied. The latter hold along the subvariety
\beq\label{leaves1}
\Lambda_1^c:=\{L\in \Lambda_1^c| \,r_{i,\,0}(L)=c_i, 1=1,\ldots k\}
\eeq
where $c=(c_1,\ldots,c_k)$ are constants and $r_{i,\,0}(L)=r_{i,\,0}$ are the coefficients of the polynomial $\det L(w)=w^{k+1}+\sum_{i=1}^{k} r_{i,\,0}w^i$.
\begin{cor} \label{lm:constraint}
The form $\omega^{(1)}$ restricted to the subvariety $\Lambda_1^c$
is normalization independent.
\end{cor}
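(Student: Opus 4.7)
The plan is to invoke Lemma \ref{leaves} with $i=1$: normalization independence on $\Lambda_1^c$ reduces to verifying that on this subvariety the 1-form $E^{-1}\delta E(w)\,d\ln w$ is holomorphic in a neighborhood of every point $p_\a$ contributing to the sum in (\ref{forms}), namely the marked points $p_\pm$ and the zeros $p_\ell$, $\ell=1,\ldots,k$, of $E$ with $w(p_\ell)\neq 0$.

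Holomorphy at $p_-$ is automatic and requires no constraint. In the local parameter $z$ with $w=z^{-n}$, the expansion (\ref{Eexpansion}) gives $\delta E(w)=\sum_{s\geq 1}\delta e_s\,z^{s-k-1}$, while $E^{-1}=z^{k+1}(1+O(z))$ and $d\ln w=-n\,dz/z$; multiplying yields $-n(\delta e_1+O(z))\,dz$, regular at $p_-$. At each $p_\ell$ the function $E$ has a simple zero and $w(p_\ell)=w_\ell\neq 0$, so $d\ln w$ is locally holomorphic and $E$ serves as a local parameter. The chain-rule identity $\delta E(w)\,dw+\delta w(E)\,dE=0$ then gives $\delta E(w)|_{p_\ell}=-\delta w_\ell\cdot(dE/dw)(p_\ell)$, so holomorphy of $E^{-1}\delta E(w)\,d\ln w$ at $p_\ell$ is equivalent to $\delta w_\ell=0$. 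This is precisely what the defining constraints of $\Lambda_1^c$ guarantee: substituting $E=0$ into (\ref{spectralcurve}) yields $R(w,0)=w(w^k+r_{k,0}w^{k-1}+\cdots+r_{1,0})$, so the $w_\ell$ are the nonzero roots of a polynomial whose coefficients $r_{i,0}$ are by hypothesis frozen to the constants $c_i$, whence $\delta w_\ell=0$ for every $\ell$.

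For $p_+$ I will argue indirectly via the residue theorem. The point $p_+$ is the unique preimage of $w=0$ on the spectral curve and $E$ vanishes there for every operator $L$, so $\delta E(w)$ has at least a simple zero at $p_+$; from (\ref{w0}) one has $d\ln w=(n/E+O(1))\,dE$, hence $E^{-1}\delta E(w)\,d\ln w$ has at worst a simple pole at $p_+$. Since holomorphy has already been established at $p_-$ and at every $p_\ell$, the identity $\sum_\a \res_{p_\a}=0$ on the compact curve $\Gamma$ forces the residue at $p_+$ to vanish, and for a form whose only possible singularity there is a simple pole this is equivalent to holomorphy.

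The main conceptual step, and what I expect to deserve the most care, is the translation of the algebraic constraints $r_{i,0}=c_i$ into the analytic vanishing $\delta w_\ell=0$ at each auxiliary point $p_\ell$; once this is in place the remainder of the argument consists of routine local Laurent expansions at $p_\pm$ combined with the global residue sum rule.
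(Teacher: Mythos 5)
Your proposal is correct and follows essentially the same route as the paper's own argument for Example $i=1$: automatic holomorphy at $p_-$, the chain-rule identity $\delta E(w)\,dw+\delta w(E)\,dE=0$ reducing holomorphy at each $p_\ell$ to $\delta w_\ell=0$ (which the frozen coefficients $r_{i,0}=c_i$ of $R(w,0)$ guarantee), and the residue theorem to handle $p_+$. The only difference is that you spell out the local Laurent expansions that the paper leaves implicit.
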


\medskip
\begin{rem} For $i>1$ the subvarieties $\Lambda^c_i$, on which the restriction of $\omega^{(i)}$ is normalization independent, are described in a similar way by a system of $i(k+1)-1$ equations:
\beq\label{leavesI}
\Lambda_i^c:=\{L\in \Lambda_i^c| \, w_{\ell,s}=c_{\ell,s}, s=1,\ldots, i;\, w_s=c_s, s=2,\ldots, i \}
\eeq
where $w_{\ell,s}$ are the coefficients of the expansions
\beq\label{wjexpansion}
w=\sum_{s=0}^\infty w_{\ell,s}E^s
\eeq
of $w$ at the preimages $p_\ell$ on $\G$ of $E=0$ at which $w(p_\ell)\neq 0$; $w_s$ are the coefficients of the expansion (\ref{wzero}) of $w$ at $p_+$  and $c_{i,s}, c_s$ are constants. Hence $\Lambda_i^c$ is of dimension $(n-1)k-i+1$. Recall that the family of curves $\G$ defined by  equations of the form (\ref{spectralcurve}) is of dimension $\frac{k(n+1)}2$ (the number of the coefficients $r_{ij}$). For generic values of coefficients $r_{ij}$ the curve $\G$ is smooth and has genus $g=\frac{k(n-1)}2 $\,. Therefore, the correspondence (\ref{correspondence}) restricted to $\Lambda_i^c$ identifies the latter with the total space of the Jacobian bundle over the space of the corresponding spectral curves. For $i>1$ the dimension of the fiber is {\it bigger} then the dimension of the base. Hence the form $\omega^{(i)}$ restricted to $\Lambda_i^c$ is degenerate for $i>1$.
\end{rem}
\subsection{The Darboux coordinates}
For completeness, let us present a construction of the Darboux coordinates for the restriction $\wh\omega^{(i)}$ of $\omega^{(i)}$ onto the subvariety $\Lambda_i^c$, i.e.
\beq\label{omegarestriction}
\wh\omega^{(i)}:=\omega^{(i)}|_{\Lambda^c_i}
\eeq

\begin{theo}
Let $\g_s$ be the poles of the BA function. Then the equation
\beq\label{d1}
\wh\omega^{(i)}=\frac1n\sum_{s=1}^{g} E^{-i}(\g_s) \delta E(\g_s)\wedge \delta\ln w (\g_s).
\eeq
holds.
\end{theo}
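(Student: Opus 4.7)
The plan is to apply the residue theorem on the spectral curve $\G$ to the meromorphic differential
\[
\Phi \,:=\, E^{-i}\,\langle\psi^+\,\delta L\wedge\delta\psi\rangle\,d\Omega,
\]
whose sum of residues at the points $p_\alpha$ (namely $p_\pm$ and, for $i>0$, the zeros $p_\ell$ of $E$) computes $-2\wh\omega^{(i)}$ by the definition (\ref{forms}). Since the total sum of residues of a meromorphic differential on the compact curve $\G$ vanishes, $\wh\omega^{(i)}$ is determined by the residues of $\Phi$ at the remaining singular points, which are the poles $\g_s$ of $\psi$ and the poles $\g_s^+$ of $\psi^+$.

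I would first verify that each residue at $\g_s^+$ vanishes: at such a point $\psi^+$ has a simple pole, but by the very definition of the dual divisor (Section~2.3) $d\Omega(\g_s^+)=0$, so $\psi^+\,d\Omega$ is already holomorphic; since $\delta L$ is independent of the point on $\G$ and $\delta\psi$ has poles only at $\g_s$, the full differential $\Phi$ is regular at each $\g_s^+$. Hence $\wh\omega^{(i)}$ reduces, up to a numerical factor fixed at the end, to a sum of residues at the $\g_s$.

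For each $\g_s$ I would expand the Bloch function in the local uniformizer $w$ as $\psi_i(w) = A_i(L)/(w-w_s(L)) + O(1)$; because the pole location itself depends on $L$, the variation at fixed $w$ acquires a second-order pole,
\[
\delta\psi_i(w) = \frac{A_i\,\delta w_s}{(w-w_s)^2} + \frac{\delta A_i}{w-w_s} + O(1).
\]
Combined with the simple zero of $d\Omega$ at $\g_s$ and the regularity of $\psi^+$ and $\delta L$ there, $\Phi$ has only a simple pole at $\g_s$, whose residue extracts solely the leading $(w-w_s)^{-2}$-part of $\delta\psi_{i-j}$; the sub-leading $\delta A_{i-j}/(w-w_s)$ term, after multiplication by the vanishing $d\Omega$, gives a contribution that is regular at $\g_s$ and so drops out.

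The final step is to identify the surviving piece with $E^{-i}(\g_s)\,\delta E(\g_s)\wedge\delta\ln w(\g_s)$. Two ingredients enter in tandem: (i) the formula $d\Omega = dw/(nw\langle\psi^+\psi\rangle)$ from Lemma~3.2, which, since $\langle\psi^+\psi\rangle$ has a simple pole at $\g_s$ with residue $\rho_s=\frac{1}{n}\sum_i\psi^+_i(\g_s)A_i$, expresses the coefficient of the simple zero of $d\Omega$ at $\g_s$ explicitly in terms of $w_s$ and $\rho_s$; and (ii) the vector identity $(\delta L)A = \delta E(\g_s)\,A - (L-E_s)\,\delta A$ obtained by differentiating $(L-E_s)A=0$. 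Pairing the latter with $\psi^+(\g_s)$ and using the adjoint equation $\psi^+(L-E_s)=0$ at $\g_s$ collapses the inner sum $\sum_{i,j}\psi^+_i(\g_s)\,A_{i-j}\,\delta a_i^{(j)} = \sum_i\psi^+_i(\g_s)(\delta L\,A)_i$ to $n\rho_s\,\delta E(\g_s)$. Substituting, the $\rho_s$'s cancel between numerator and denominator, the factor $dw/w$ in $d\Omega$ produces $\delta\ln w(\g_s)$ after wedge with $\delta E(\g_s)$, and the stated formula (\ref{d1}) emerges. The main obstacle is this last manipulation: keeping all prefactors, signs, and wedge orientations under control, and confirming that all contributions except the advertised one really cancel — in particular the clean vanishing of the $\delta A$-contribution and the correct matching of the normalization constants between $\tfrac12$ in (\ref{forms}), $\tfrac1n$ in $\langle\cdot\rangle$, and $\tfrac1{nw_s\rho_s}$ in the linear part of $d\Omega$.
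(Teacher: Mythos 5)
Your overall strategy --- apply the residue theorem on $\G$ to the differential $E^{-i}\langle\psi^+\,\delta L\wedge \delta\psi\rangle\, d\Omega$ and trade the residues at the points $p_\a$ for the residues at the remaining singularities --- is the same as the paper's, and your local analysis at the poles $\g_s$ (second-order pole of $\delta\psi$, simple zero of $d\Omega$, the eigenvalue-variation identity collapsing $\langle\psi^+\delta L\,\psi\rangle$ to $\delta E\langle\psi^+\psi\rangle$) reproduces the paper's equation (\ref{65}) correctly. The observation that the residues at the dual poles $\g_s^+$ vanish because $d\Omega(\g_s^+)=0$ is also correct (the paper leaves this implicit).

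However, there is a genuine gap: you have missed an entire family of singularities of the differential, namely the zeros $q_j$ of $dw$, i.e.\ the branch points of $\G$ over the $w$-plane. Since $\psi$ is normalized as a function of $w$, its variation $\delta\psi$ at fixed $w$ acquires a simple pole at each $q_j$: expanding $\psi$ in the local coordinate $\sqrt{w-w(q_j)}$ and varying the position of the branch point gives $\delta \psi=-(d\psi/dw)\,\delta w(q_j)+O(1)$, and $d\psi/dw$ blows up there. These points contribute nontrivially. In the paper, after replacing $\delta L$ by $\delta L-\delta E$, using the adjoint identities together with $\langle\psi^+\psi\rangle(q_j)=0$ (a consequence of (\ref{decendent})), and applying the residue theorem a second time on the leaf $\Lambda^c_i$, the total contribution of the $q_j$ turns out to equal \emph{exactly} the contribution of the $\g_s$, i.e.\ another copy of $\frac1n\sum_s E^{-i}(\g_s)\,\delta E(\g_s)\wedge\delta\ln w(\g_s)$. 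It is precisely this doubling that cancels the prefactor $-\frac12$ in (\ref{forms}) and produces the coefficient $\frac1n$ in (\ref{d1}). With only the $\g_s$ residues your computation yields $\frac1{2n}\sum_s E^{-i}(\g_s)\,\delta E(\g_s)\wedge\delta\ln w(\g_s)$, and no ``fixing of the numerical factor at the end'' can repair this, since the $-\frac12$ is part of the definition of $\omega^{(i)}$. You need to carry out the residue analysis at the zeros of $dw$ to close the argument.
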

\noindent
The meaning of the right hand side of this formula is as follows.
The spectral curve is equipped by definition
with the meromorphic functions $E$ and $w$.  The evaluations $E(\g_s),\ w(\g_s)$ at the points
$\g_s$ define functions on the space of $L$ operators. The wedge product of
their external differentials is a two-form on our phase space.

\bigskip
\noindent{\it Proof.}
The proof of the formula (\ref{d1}) is very general
and does not rely on any specific form of $L$. Let us present it briefly
following the proof of Lemma 5.1 in \cite{kr4} (more details can be found in
\cite{spin}).
The differential whose residues define $\omega^{(i)}$ by (\ref{forms}) is a meromorphic differential on the
spectral curve $\G$. Therefore, the sum of its residues at the punctures
$p_\a$  is equal to the negative of the sum of the other
residues on $\G$. There are poles of two types. First of all, the differential has poles at the poles
$\g_s$ of $\psi$. Note that $\delta \psi$ has pole of the second order
at $\g_s$. Taking into account that $d\Omega$ has zero at $\g_s$
we obtain
\beq\label{65}
\res_{\g_s} E^{-i}\langle\psi^+\,\delta L\wedge \delta \psi\rangle \,{d\Omega}=
\frac{E^{-i}\langle\psi^+\delta L\psi\rangle}{n\langle\psi^+\psi\rangle}(\g_s)
\wedge \delta \ln w(\g_s)=
\eeq
$$
=\frac 1nE^{-i}(\g_s)\delta E(\g_s)\wedge\delta \ln w(\g_s).
$$
The last equality follows from (\ref{varE1}) which is just
the standard formula for the variation of an eigenvalue of an  operator.

The second set of poles of the differential in the righthand side of (\ref{forms}}) is the set of zeros $q_j$ of the differential $dw$. Indeed, in the neighborhood of $q_j$ the local coordinate on the spectral curve is $\sqrt{w-w(q_j)}$ (in general position when the zero is simple).
Taking a variation of the Taylor expansion of $\psi$ in that coordinate
we get that
\beq
\delta \psi=-{d\psi\over dw}\delta w(q_j)+O(1).\label{66}
\eeq
Therefore, $\delta \psi$ has simple pole at $q_j$. In the similar way
we have
\beq
\delta E=-{dE\over dw} \delta w(q_j). \label{67}
\eeq
Equalities (\ref{66}) and (\ref{67}) imply that
\beq
\res\,_{q_j}
E^{-i}\langle\psi^{+} \delta L\wedge \delta \psi\rangle d\Omega=
\res_{q_j}\frac{E^{-i}\langle\psi^+\delta L d\psi\rangle}{n\langle\psi^+\psi\rangle} \wedge
{\delta E d\ln w\over
dE} \label{68}
\eeq
Due to skew-symmetry of the wedge product we we may replace $\delta L$ in
(\ref{68}) by $(\delta L-\delta E)$. Then, using the identities
$\psi^*(\delta L-\delta E)= \delta \psi^* (E-L)$  and
$(E-L)d\psi=-dE\psi$, one gets
\beq\label{000}
\res_{q_j}
E^{-i}\langle\psi^{+} \delta L\wedge \delta \psi\rangle d\Omega
=-\res_{q_j}\frac{E^{-i}\langle\delta \psi^+\psi\rangle}{n\langle\psi^+\psi\rangle}\wedge \delta E d\ln w=
\eeq
$$
=\res_{q_j}\frac{E^{-i}\langle\psi^+\delta \psi\rangle}{n\langle\psi^+\psi\rangle}\wedge \delta E d\ln w\,,
$$
where in the last equality we use the identity $\langle\psi^+\psi\rangle(q_j)=0$ (which follows, as we already stressed, from (\ref{decendent})). By definition of the locus on which $\omega^{(i)}$ is normalization independent (see Lemma \ref{leaves}) the form in the right hand side of (\ref{000}) has no poles at the points $p_\a$. Besides of poles at $q_i$ it has poles at $\g_s$, only. Hence, after the restriction on the leave we get the equation

\beq\label{f000}
\sum_j\res_{q_j}\frac{E^{-i}\langle\psi^+\delta \psi\rangle}{n\langle\psi^+\psi\rangle}\wedge \delta E d\ln w=
-\sum_s\res_{\g_s}\frac{E^{-i}\langle\psi^+\delta \psi\rangle}{n\langle\psi^+\psi\rangle}\wedge \delta E d\ln w=
\eeq
$$=\frac 1n\sum_s  E^{-i}(\g_s)\delta E(\gamma_s) \wedge \delta \ln w(\g_s).$$

Equations (\ref{65},\ref{000}, \ref{f000}) directly imply (\ref{d1}). The theorem is proved.

\subsection {The Hamiltonians}

The next step in the construction of the Hamiltonian theory for the systems admitting the Lax representation is to show that the contraction of the form $\omega^{(i)}$, restricted to the subvariety, where it is normalization independent, with the vector field $\p_t$ defined by the Lax equation is an exact one form, i.e. $\widehat\omega^{(i)}(\p_t,X)=\delta H^{(i)}(X)$. Then, on any subvariety on which the form $\widehat\omega^{(i)}$ is non-degenerate the vector-field $\p_t$ is Hamiltonian with the Hamiltonian $H$.

Below we apply the general scheme to equations (\ref{LT1}) and (\ref{LT2}) and explicitly compute the corresponding Hamiltonians. Let $\p_t$ be the vector field defined by the Lax equations, then
\beq\label{lax-forms}
\p_t\, L=[M,L],\ \ \p_t \psi=M\psi-\psi f
\eeq
where $f$ is a meromorphic function on the spectral curve.

\medskip
\noindent
\begin{rem}
The appearance of the term with $f$ in the expression for $\p_t \psi$ is due to the fact that in the definition of the $\omega^{(i)}$ it is assumed that the normalization of the Bloch function $\psi$ is {\it time-independent}: $\psi_0\equiv 1$.
With that normalization if the operator $L$ depends on $t$ according to a Lax equation, then the spectral curve $\G$ is time independent and the time dependence of the pole divisor $D(t)$ of $\psi(t)$ becomes linear after the Abel transform. The latter
follows form the relation
\beq\label{time-normalization}
\psi_i(t,p)=\Psi_i(t,p)\Psi_0^{-1}(t,p)
\eeq
where $\Psi$ is the multi variable BA function given by  (\ref{psiformula}). Then the equation (\ref{hierarch}) implies
equation (\ref{lax-forms}) with $f(t,p)=\p_t\ln \Psi_0(t,p)$. The function
$f$ has poles at the marked points $p_\pm$ of the form
\beq\label{f-infty}
f=\sum_{s=1}^{m_\pm} c^{\pm}_s z^{-s}+O(1)
\eeq
where $c_s^{\pm}$ are {\it constants} which in fact parameterize commuting flows of the hierarchy and $m_\pm$ are positive and negative orders of the operator $M$.
\end{rem}

\begin{theo}\label{thm:1} The vector-field $\p_{t_m^{\pm}}$ defined by Lax equation (\ref{lax}) restricted to the subvariety $\Lambda^c_i$ is Hamiltonian for $i=0,1$ with respect to the form $\wh\omega^{(i)}$, and with the Hamiltonian
\beq\label{H-}
H_{t_m^{-}}^{(0)}=\res_{p_-} z^{-m}E(z)d\ln z=e_{m+k+1}
\eeq
\beq \label{H-1}
H_{t_m^{-}}^{(1)}=\res_{p_-} z^{-m}\ln E(z)d\ln z
\eeq
where $E(z)$ is the series (\ref{Eexpansion}) with the coefficients defined in Lemma \ref{L1}, and
\beq\label{H+}
H_{t_m^{+}}^{(i)}=\frac 1n \res_{p_+} E^{-m-i}\ln w(E) dE,\ \ i=0,1
\eeq
where $w(E)$ is defined in (\ref{wzero})
\end{theo}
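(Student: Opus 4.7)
The proof follows the general Krichever scheme (as in \cite{kp1, Kr2}) for reading off Hamiltonians of Lax systems from the universal two-form $\omega^{(i)}$. I would organize it in three stages.

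First, I would reduce $i_{\partial_t}\omega^{(i)}$ to a single clean residue. Substituting $\partial_t L = [M,L]$ and $\partial_t\psi = M\psi - f\psi$ (with $M=L_m^\pm$ and $f=\partial_t\ln\Psi_0$) into the defining formula (\ref{forms}), and using the variational identities $(L-E)\delta\psi = (\delta E-\delta L)\psi$ and $\delta\psi^+(L-E)=\delta E\,\psi^+ - \psi^+\delta L$, the operator identity $(\psi^+M)(L-E)=\psi^+[M,L]$ (a consequence of $(L-E)M\psi=[L,M]\psi$), and the key relation $\langle\psi^+\psi\rangle d\Omega = dw/(nw)$ built into (\ref{form1}), the contracted bilinear form collapses. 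The $f$-dependent piece assembles into $-\tfrac{1}{n}E^{-i}f\,\delta E\,d\ln w$, while the leftover averages $\langle\psi^+M\delta L\psi\rangle$ and $\langle\delta\psi^+[M,L]\psi\rangle$ give residue sums that vanish on $\Lambda_i^c$ by exactly the mechanism that guarantees normalization independence in Lemma \ref{leaves}. The net result is
\[
i_{\partial_t}\omega^{(i)}(\delta) = -\frac{1}{n}\sum_\alpha \res_{p_\alpha} E^{-i}\,f\,\delta E\,d\ln w.
\]

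Second, I would pinpoint which residue survives. From $f=\partial_t\ln\Psi_0$ together with the theta-functional formula (\ref{psiformula}), the $\partial_{t_m^-}$ flow satisfies $f = z_-^{-m}+O(1)$ at $p_-$ and $f = O(1)$ at every other puncture; symmetrically for $\partial_{t_m^+}$. Meanwhile, the defining condition of the leaves $\Lambda_0^c$ and $\Lambda_1^c$ (Lemma \ref{leaves} and its corollaries) is precisely that $E^{-i}\delta E\,d\ln w$ has vanishing residues at each $p_\alpha$. Multiplying a residueless form by $f$ leaves a contribution only where $f$ itself has a pole, so the sum reduces to a single residue at the relevant marked point.

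Third, I would evaluate that residue. For $\partial_{t_m^-}$ at $p_-$: with $z_-$ as local coordinate and $w=z_-^{-n}$, so $d\ln w = -n\,d\ln z_-$, the residue becomes $\res_{p_-} z_-^{-m}\,E^{-i}\,\delta E\,d\ln z_-$; plugging in the expansion (\ref{Eexpansion}) and recognizing the result as the variation of $\res_{p_-}z_-^{-m}E^{1-i}/(1-i)\,d\ln z_-$ (with the $i=1$ case understood as $\delta\ln E$) gives $\delta e_{m+k+1}$ for $i=0$ and the variation of (\ref{H-1}) for $i=1$. For $\partial_{t_m^+}$ at $p_+$: use $E$ as local coordinate (valid because $E$ has a simple zero there) and the chain-rule identity $\delta E|_w\,d\ln w = -\delta\ln w|_E\,dE$ to rewrite the residue as $\tfrac{1}{n}\res_{p_+} E^{-m-i}\,\delta\ln w|_E\,dE$, which matches $\delta H_{t_m^+}^{(i)}$ in (\ref{H+}).

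The main obstacle is the first stage. The operator identities among $\langle\psi^+[M,L]\delta\psi\rangle$, $\langle\psi^+\delta L M\psi\rangle$, and $\langle\psi^+M\delta L\psi\rangle$ that one gets from the variational equations are circular and do not individually kill the "extra" averages; the actual cancellation is global, happening only after summing residues over $\Gamma$ and invoking the leaf constraints that make $E^{-i}\delta E\,d\ln w$ residueless. Once the clean master formula is established, stages two and three are essentially mechanical residue computations at $p_\pm$.
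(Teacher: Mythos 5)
Your proposal follows the paper's own proof essentially step for step: the same contraction of $\omega^{(i)}$ with the Lax vector field via $(L-E)\delta\psi=(\delta E-\delta L)\psi$, the same disposal of the extra averages (global residue theorem for the term with poles only at $p_\a$, the leaf constraints of Lemma \ref{leaves} for the $(M-f)$ term), the same master formula $-\frac 1n\sum_\a\res_{p_\a}E^{-i}f\,\delta E\,d\ln w$, and the same final residue evaluations at $p_\pm$ using $\delta E(w)\,d\ln w=-\delta\ln w(E)\,dE$ and the normalizations $f_m^-=z^{-m}+O(z)$, $f_m^+=E^{-m}+O(E)$. The only cosmetic difference is that the paper packages the surviving piece as $\langle\psi^+(2f+(M-f))\psi\rangle\,\delta E$ in (\ref{subst3}); your account is correct.
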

\noindent
{\it Proof.} The substitution of (\ref{lax-forms}) and (\ref{form1} into (\ref{forms}) gives
\beq\label{subst1}
\omega^{(i)}(\p_t,\cdot)=-\frac12 \sum_{p_\a} \res_{p_\a}\left( \langle\psi^+ [M,L]\delta \psi\rangle-
\langle\psi^+\delta L(M\psi-\psi f)\rangle\right)\frac {d\ln w}{nE^{i}\langle \psi^+\psi\rangle}
\eeq

Using once again the equation $(L-E)\delta \psi=-(\delta L-\delta E)\psi$ we get
that the differential in the right hand side of (\ref{subst1}) is equal to
\beq\label{subst2}
-\frac 12 \left(\langle \psi^+(M\delta E+\delta L f) \psi\rangle-\langle \psi^+(\delta L\,M+M\,\delta L)\psi\rangle
\right)\frac{d\ln w}{nE^{i}\langle \psi^+\psi\rangle}
\eeq
The second term has poles only at the points $p_\a$. Hence, the sum of its residues at these points is equal to zero.
The first term is equal to
\beq\label{subst3}
-\frac12 \langle \psi^+(2f+(M-f))\psi\rangle \delta E\frac{d\ln w}{nE^{i}\langle \psi^+\psi\rangle}
\eeq
From the  definition of $f$ in (\ref{lax-forms}) it follows that $\langle\psi^+(M-f)\psi\rangle$ is holomorphic at $p_\a$.
Since the restriction of $E^{-i}\delta E d\ln w$ onto $\Lambda_i^c$ is holomorphic at the marked points $p_\a$, the second term in (\ref{subst3}) restricted to $\Lambda_i^c$ has no residues at $p_\a$. Recall that the function $f$ has poles only at the points $p_\pm$. Using the identity $\delta E(w)d\ln w=-\delta \ln w(E)dE$ for the residue at $p_+$ we finally obtain the equation
\beq\label{Hfinal}
\widehat \omega^{(i)}(\p_t,\cdot)=\frac 1n \res_{p_+}f(E)\delta \ln w(E) E^{-i}dE-\frac 1n\res_{p_-} f(w) E^{-i}(w)\delta E(w) d\ln w
\eeq
Recall, that the choice of the basis vector fields $\p_{t_m^{\pm}}$ of the hierarchy is fixed by the choice of local  coordinates near the marked point $p_{\pm}$. As it follows form the constructions of Lemma \ref{L1} and Lemma \ref{L2} the most natural choice is $z=w^{-1/n}$ at $p_-$ and $z=E$ at $p_+$. With this choice of local the function $f_m^{\pm}$ corresponding to $t=t_m^{\pm}$ has pole at $p_{\pm}$ of the form $f_m^+=E^{-m}+O(E)$ and $f^{-}_m= z^{-m}+O(z), z=w^{-1/n}$, respectively. Then
(\ref{Hfinal}) implies $\wh \omega^{(i)}(\p_{t_m^\pm},\cdot)=\delta H^{(i)}_{t_m^\pm}$. The theorem is proved.

\section{Special coordinate systems. Examples}

We begin this section by presenting certain systems of coordinates on the space of lower triangular operators in which the forms $\omega^{(\ell)}, \ell=1,2,$ have {\it local densities}. By latter we mean coordinates $x_i^{(j)}$ in which the form can be written as $\omega=\sum f_{i,i_1}^{(j,j_1)}\delta x_s^{(j)}\wedge \delta x_{i_1}^{(j_1)}$ where the sum is taken over the set of all the indices with $|i-i_1|<d_1$ for some integer $d_1$ independent on the period $n$ of the operator. Moreover it is assumed also that the coefficients $f_{i,i_1}^{(j,j_1)}$ are functions of $x_{i_2}^{(j_2)}$ such that $|i-i_2|< \ell_2$ for some $n$-independent integer $\ell_2$.

\begin{rem}Note, that in the original coordinates on the space of lower triangular operators that are the coefficients $a_i^{(j)}$ of the operators are not of the type we are looking for. Indeed, by definition the densities of the forms involve the coefficients of the expansions of $\psi_i$ at the points $p_\a$ which are not local.
\end{rem}
\medskip

\subsection {The form $\omega^{(0)}$}
The first coordinate system, in which the form $\omega^{(0)}$ has local density, we identify with the set of of the first $k$ coefficients of the expansion (\ref{formalpsi}) of the Bloch solution at the marked point $p_-$. The fomulae (\ref{xi1}, \ref{xi})
for $s=1,\ldots k$ can be regarded as the definition of the map
\beq\label{map-local}
\{\xi_s^-(i),e_s\}\longmapsto \{a_i^{(j)}\}
\eeq
where the variables functions $\xi_s^-(i)$ are defined up to a common shift $\xi_s^-(i)\to \xi_s^-(i)+c_i$ or equivalently normalized by the constraint $\xi_s^-(0)=0$.

The form $\omega^{(0)}$ by definition in (\ref{forms}) is an average over $i$ of some expressions involving $\xi_s^-(i-j), j=0\ldots k$,  and the first $(k-1)$ coefficients of the expansion at $p_-$ of the function
\beq\label{psi*}
\psi_i^*:=\frac {\psi_i^+}{\langle \psi^+\psi\rangle}
\eeq
where $\psi^+$ is the dual BA function (\ref{psipm+}). The coefficients of $\psi^*_i$ can be found recurrently from the relations
\beq\label{residues1}
\res_{p_-} \psi_i^*\psi_{i-j}d\ln z=\delta_{0,j}
\eeq
which follow from (\ref{residues}) and (\ref{form1}). Hence, the expression of any of these coefficients  in terms of $\xi_{s}^-$ is local. Then the statement that $\omega^{(0)}$ has local density in the new coordinates is an obvious corollary of the definition.

\medskip
\noindent{\bf Example $k=1$} The initial coordinates on the space of $n$ periodic lower triangular operators of order are their coefficients $a_i$: $L=a_iT^{-1}+T^{-2}$. The new coordinates are $x_i:=\xi_1^-(i)$  defined up to a common shift and a constant $e_1$.
The expression for old coordinates in terms of new ones is given by formula (\ref{xi1}):
\beq\label{xitoa}
a_i=x_i-x_{i-2}+e_1
\eeq
The substitution of the expansion of $\psi$ and $\psi^+$ into (\ref{forms}) gives for $k=1$ the following expression for the restriction of $\omega^{(0)}$ onto the symplectic leaf $e_1=const$:
\beq\label{form01}
\wh\omega^{(0)}=\frac 12 \langle d a_i\wedge dx_{i-1}\rangle=\langle dx_i\wedge dx_{i-1}\rangle
\eeq
where as before $ \langle \cdot\rangle$ denotes the average of a periodic  expression in brackets over the period.

\begin{rem} In the formulae above the differential on the phase space (the space of parameters) was denoted by $\delta$ in order do distinguish it form the differential $d$ with respect to the spectral parameter. After taking the residues of the differential in the spectral parameter, here and below we change $\delta$ in our notations to more conventional one, i.e. $dx_i:=\delta x_i$.
\end{rem}

\medskip
According to Theorem \ref{thm:1}, equations (\ref{lttoda1}) restricted to the symplectic leaf $\langle a_i\rangle=\langle e^{\varphi_i-\varphi_{i-1}}\rangle=e_1=const$ are Hamiltonin with respect $\wh \omega^{(0)}$ with the Hamiltonian $H^{(0)}_{t_1^-}:=e_3$. In order to find its explicit expression in term of the new coordinates we use the equations (\ref{xi}). For $s=2, k=1$ we have
\beq\label{x2}
\xi_2^-(i)-\xi_2^-(i-2)+e_1\xi_1(i)^-+e_2=a_i\xi_1^-(i-1)
\eeq
Then from (\ref{xitoa}) it follows
\beq\label{x21}
\xi_2^-(i)-\xi_2^-(i-2)+e_2=x_ix_{i-1}-x_{i-1}x_{i-2}+e_1(x_{i-1}-x_i).
\eeq
Taking the average of equation (\ref{x21}) we get $e_2=0$ (recall that in the proof of Lemma \ref{leaves} it was shown that $e_{k+1}=0$ for any $k$). For $s=3,k=1$ equation (\ref{xi}) has the form
\beq\label{x3}
\xi_3^-(i)-\xi_3^-(i-2)+e_1\xi_2^-(i)+e_3=a_i\xi_2^-(i-1)=(x_i-x_{i-2}+e_1)\xi_2^-(i-1)
\eeq
Taking the average of (\ref{x3}) we get the explicit expression for the Hamiltonian of equation (\ref{lttoda1}) in terms of the new coordinates:
\beq\label{e3}
H^{(0)}_{\p_{t_1^-}}=e_3=\langle(x_i-x_{i-2})\xi_2^-(i-1)\rangle=\langle x_i(\xi_2^-(i-1)-\xi_2^-(i+1)\rangle=
\eeq
$$
=\langle x_i^2(x_{i-1}-x_{i+1})\rangle
$$
where for the last equation we use (\ref{x2}).

\medskip
\noindent
{\bf Example:} k=2. The expressions of the coefficients of a lower triangular operator of order $3$ in terms of the coordinates $x_i:=\xi_1^-(i)$ and $y_i:=\xi_2^-(i)$ are as follow:
\beq\label{k31}
a_i^{(2)}=x_i-x_{i-3}+e_1
\eeq
\beq\label{k32}
a_i^{(1)}=y_i-y_{i-3}+e_1x_i+e_2-a_i^{(2)}x_{i-2}=
\eeq
$$
y_i-y_{i-3}-(x_i-x_{i-3})x_{i-2}+e_1(x_i-x_{i-2})+e_2
$$
The substitution of expansions for $\psi$ and $\psi^+$ into (\ref{forms}) gives the following
\beq\label{omega1}
\omega^{(0)}=\frac 12 \langle\, da_i^{(1)} \wedge dx_{i-1}+da_i^{(2)}\wedge (\chi_1^-(i)\,dx_{i-2}+
d\xi_2^-(i-2))\,\rangle
\eeq
where $\chi_1^-$ is the first coefficient of the expansion of $\psi^+$ at the marked point $p_-$.
Equation (\ref{residues1}) with $j=1$ implies $\chi_1^-(i)=-x_{i-1}$. Then after relatively long but
straightforward computations we obtain the following expression for $\omega^{(0)}$ restricted to a leaf
along which $e_1$ and $e_2$ are constants:
\beq\label{omega1f}
\wh\omega^{(0)}=\langle\,dy_i\wedge (dx_{i-1}-dx_{i+2})
+d(x_{i-1}x_{i-2})\wedge dx_i  \,\rangle
\eeq
$$+e_1\langle dx_i\wedge dx_{i-1}\rangle
$$
Equation (\ref{lttoda1}) for $k=2$ restricted to a leaf with $e_1$ $e_2$ being some constants is Hamiltonian with respect to the form (\ref{omega1f}) and with the Hamiltonian $H^{(0)}_{t_1^-}=e_4$. Similar to  computations of $e_3$ above we get the following expression for the Hamiltonian $H:=e_4$:
\beq\label{e4}
H=\langle\, y_{i-1}(y_i-y_{i-3})>+<x_ix_{i-1}x_{i-2}(x_{i-1}-x_i)>+e_1<(x_{i}^2(x_{i-1}-x_{i+1})\rangle+
\eeq
$$
+e_2\langle x_{i-1}(x_i-x_{i-1})\rangle+\langle y_i(x_{i+2}^2-x_{i-1}^2-x_{i+2}x_{i+1}+x_{i-2}x_{i-1}\rangle
$$
\medskip

\subsection{The form $\omega^{(1)}$}
A system of coordinates in which the form $\omega^{(1)}$ becomes local is suggested by its definition in (\ref{forms}) which involves the evaluation of $\psi_i$ at the marked points $p_\ell\in \G$ that are the preimages of $E=0$ with $w(p_\ell)\neq 0$.

Let $\varPhi=\{\phi_i^\ell\}$ be a $(k\times n)$ matrix of rank $k$, i.e. $i=1,\ldots,n; \ell=1\ldots,k$ . We call two matrices equivalent $\varPhi\sim \varPhi'$ if $\varPhi'=\varPhi \lambda$, where $\lambda={\rm diag}(\lambda_1,\ldots,\lambda_k)$. The space of equivalence classes $[\varPhi]:=(\varPhi/\sim)$ can be seen as the space of {\it ordered sets} of $k$ distinct points in $n-1$ dimensional projective space, $ [\phi^\ell]\in \mathbb P^{n-1}$.

Consider the space of pairs $\{[\,\varPhi],W\}$ where $W=\{w_1,\ldots,w_k\}$ is a set of non-zero numbers, $w_\ell\neq 0$. The symmetric group $S_k$ acts on the space of such pairs by simultaneous permutation of rows of matrix $\varPhi$ and coordinates of the vector $W$. Now we are going to define a map from the corresponding factor-space to the space of $n$-periodic
operators $L$ of the form (\ref{operator}):
\beq\label{map}
\{[\,\varPhi],W\}/S_k\longmapsto L
\eeq
First note that given a set $W=\{w_1,\ldots,w_k\}$ of non-zero numbers any $(k\times n)$ matrix $\varPhi$ can be extended
to a unique $(k\times \infty)$ matrix $\phi_i^\ell, i\in \mathbb Z,$ such that the equation
$\phi_{i-n}^\ell=w_\ell\phi_i^\ell$ holds. Then it is easy to see that there is a unique operator $L$ of the form (\ref{operator}) such that for any $\ell$ the sequence $\phi^\ell=\{\phi_i^\ell\}$ is a solution of the equation
\beq\label{zerolevel}
L\phi^\ell=0 \ \Leftrightarrow \ \sum_{j=1}^{k} a_i^{(j)}\phi_{i-j}^\ell=-\phi_{i-k-1}^\ell
\eeq
Indeed, for fixed $i$ the system (\ref{zerolevel}) is a system of $k$ non-homogeneous linear equations for unknown coefficients
of $L$. Hence, by Cramer's rule:
\beq\label{cramersA}
a_i^{(j)}=-\frac {|\phi_{i-1},\ldots,\phi_{i-j+1},\phi_{i-k-1},\phi_{i-j-1},\ldots,\phi_{i-k}|}
{|\phi_{i-1},\ldots,\phi_{i-j+1},\,\phi_{i-j}\ ,\,\phi_{i-j-1},\ldots,\phi_{i-k}|}
\eeq
Here and below we use the following notations: $\phi_i$ is the $k$-dimensional vector with coordinates $\phi_i:=\{\phi_i^\ell\}$, and for any set $V_1, \ldots , V_k$ of $k$-dimensional vectors $|V_1,\ldots, V_k|$ stands for the determinant of the corresponding matrix, i.e. $|V_1,\ldots, V_k|:=\det \left(V_i^\ell\right)$.

Recall that throughout the paper we use parametrization of the leading coefficient  $a_i^{(1)}$ by variables $\varphi_i$ such that
$a_i^{(1)}=e^{\varphi_i-\varphi_{i-1}}$. Equation (\ref{cramersA}) for $j=1$ allows to identify these variables with
\beq\label{notX}
e^{-\varphi_i}:=(-1)^{ik}|\phi_{i-1},\ldots,\ldots,\phi_{i-k}|
\eeq
Then for any $j$ equation (\ref{cramersA}) takes the form
\beq\label{cramerA1}
a_i^{(j)}=(-1)^{ik+1} e^{\varphi_i}|\phi_{i-1},\ldots,\phi_{i-j+1},\phi_{i-k-1},\phi_{i-j-1},\ldots,\phi_{i-k}|
\eeq

\begin{theo} The map (\ref{map}) defined by formulae (\ref{notX}, \ref{cramerA1})
is one-to-one correspondence between open domains. Under this correspondence
equation (\ref{LT1}) and (\ref{LT2}), restricted to leaves with $w_\ell$ being fixed constants, are Hamiltonian with respect to the form
\beq\label{formfinal}
\wh \omega^{(1)}= \frac 12 \langle d\varphi_{i-1}\wedge d\varphi_i
- (-1)^{(i-1)k}e^{\varphi_{i-1}}\sum_{j=1}^k
 d a_i^{(j)}\wedge
|\phi_{i-2},\ldots,\phi_{i-k},d\phi_{i-j}|
\rangle
\eeq
and with the Hamiltonians
\beq\label{Ham-xi-eta}
H^-=\langle a_i^{(k)}\rangle\,; \ \ H^+=-\langle a_i^{(2)}e^{\varphi_{i-2}-\varphi_i}\rangle
\eeq
respectively.
\end{theo}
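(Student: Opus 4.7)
The plan is to prove the theorem in three stages: (i) bijectivity of the map (\ref{map}); (ii) reduction of the form $\wh\omega^{(1)}$ to the local expression (\ref{formfinal}); (iii) identification of the Hamiltonians $H^\pm$. Stages (i) and (iii) are direct applications of the earlier material, while stage (ii) is where the principal work lies.

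\textbf{Stage (i).} Extend any $\varPhi$ to $\phi_i^\ell$, $i\in\mathbb Z$, by the quasi-periodicity $\phi_{i-n}^\ell=w_\ell\phi_i^\ell$. Cramer's rule applied to (\ref{zerolevel}) as a $k\times k$ linear system for $a_i^{(1)},\ldots,a_i^{(k)}$ produces (\ref{notX}) and (\ref{cramerA1}) on the open locus where all the determinants $|\phi_{i-1},\ldots,\phi_{i-k}|$ are non-zero. The $S_k$-action simultaneously permutes rows of $\varPhi$ and entries of $W$, multiplying every relevant determinant by the same sign, so $L$ is invariant; I will also check that (\ref{cramerA1}) at $j=1$ combined with (\ref{notX}) recovers $a_i^{(1)}=e^{\varphi_i-\varphi_{i-1}}$. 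For the inverse, $R(w,0)=w\cdot\prod_\ell(w-w_\ell)$: the $k$ Bloch solutions of $L\phi=0$ with multipliers $w_\ell$ recover $\varPhi$ up to the diagonal rescaling quotient. A dimension count ($kn$ on both sides) completes the argument.

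\textbf{Stage (iii).} Apply Theorem \ref{thm:1} with $m=1$. Integration by parts gives $H^{(1)}_{t_1^-}=\res_{p_-}z^{-1}\,d\ln E$; substituting (\ref{Eexpansion}) yields $e_1$, which by (\ref{e1}) equals $\langle a_i^{(k)}\rangle$. Similarly $H^{(1)}_{t_1^+}=\tfrac1n\res_{p_+}E^{-1}\,d\ln w=w_1/n$, and by Corollary \ref{cor:1} together with (\ref{two}), $w_1=\xi_1^+(-n)$. Telescoping (\ref{chi1}) from $i=-n+1$ to $i=0$ with $\xi_1^+(0)=0$, and using that $e^{\varphi_{i-2}-\varphi_i}a_i^{(2)}$ is $n$-periodic (since $\varphi_{i+n}-\varphi_i$ is a constant in $i$), yields $w_1=-n\langle a_i^{(2)}e^{\varphi_{i-2}-\varphi_i}\rangle$, whence $H^+=-\langle a_i^{(2)}e^{\varphi_{i-2}-\varphi_i}\rangle$.

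\textbf{Stage (ii), the main obstacle.} The form $\omega^{(1)}$ from (\ref{forms}) has $k+2$ potential residue contributions: $p_\pm$ and the $k$ points $p_\ell$ where $E=0$ but $w(p_\ell)=w_\ell\neq 0$. On the leaf $W=\mathrm{const}$ (equivalent to $r_{i,0}=\mathrm{const}$, a leaf of $\Lambda_1^c$) the form is normalization-independent by Corollary \ref{lm:constraint}, so we may take $\psi_i(p_\ell)=\phi_i^\ell$; the dual value $\psi^+_i(p_\ell)$ is then fixed locally via (\ref{residues}) and the quasi-periodicity of Lemma \ref{lm:adjper}. Differentiating Cramer's rule (\ref{cramerA1}) expresses each $\delta a_i^{(j)}$ in terms of $d\phi_{i-j}^\ell$ and $d\varphi_i$; the wedge product in (\ref{forms}) combined with the explicit $d\Omega$ from (\ref{form1}) assembles, after summation over $\ell=1,\ldots,k$ and averaging in $i$, into the determinant expressions $|\phi_{i-2},\ldots,\phi_{i-k},d\phi_{i-j}|$ that constitute the second term of (\ref{formfinal}). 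The residues at $p_\pm$, computed from the expansions (\ref{formalpsi}) and (\ref{two}), contribute the remaining $\tfrac12\langle d\varphi_{i-1}\wedge d\varphi_i\rangle$. The hard bookkeeping is in handling the signs $(-1)^{(i-1)k}$ and the factor $e^{\varphi_{i-1}}$ correctly when expanding the determinants in Cramer's rule; once (\ref{formfinal}) is established, non-degeneracy of $\wh\omega^{(1)}$ on open leaves together with the Hamiltonian property of $\p_{t_1^\pm}$ follows from Theorem \ref{thm:1}.
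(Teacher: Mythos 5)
Your overall architecture matches the paper's: stage (i) (Cramer's rule for the forward map, identification of $w_\ell$ with the nonzero roots of $R(w,0)$ and of $\phi_i^\ell$ with $\psi_i(p_\ell)$ for the inverse) and stage (iii) (Theorem \ref{thm:1} with $m=1$, giving $e_1=\langle a_i^{(k)}\rangle$ at $p_-$ and, via Corollary \ref{cor:1} and the telescoping of (\ref{chi1}), the value $-\langle a_i^{(2)}e^{\varphi_{i-2}-\varphi_i}\rangle$ at $p_+$) are essentially the paper's arguments, and your stage (iii) is if anything more careful about the factor of $n$ than the text is.

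The problem is stage (ii), which you yourself flag as the main obstacle and then do not actually carry out. The derivation of (\ref{formfinal}) is not ``hard bookkeeping'' left over after an otherwise complete argument; it hinges on a specific lemma you never state or prove, namely a closed formula for the weighted dual values $r_\ell\,\psi_i^+(p_\ell)$, $r_\ell=\res_{p_\ell}E^{-1}d\Omega$, as signed minors of the $k\times k$ matrix $(\phi_{i-1},\ldots,\phi_{i-k})$ divided by $|\phi_{i-2},\ldots,\phi_{i-k-1}|$. In the paper this comes from applying the residue theorem to the meromorphic differentials $\psi_i^+\psi_{i-j}E^{-1}d\Omega$ for $j=2,\ldots,k+1$: these are holomorphic at $p_\pm$ except for $j=k+1$, which has residue $-1$ at $p_+$, yielding the linear system $\sum_\ell r_\ell\psi_i^+(p_\ell)\phi_{i-j}^\ell=\delta_{j,k+1}$ which is then solved by Cramer's rule; summing the resulting minors against $d\phi_{i-j}^\ell$ reassembles the determinant $|\phi_{i-2},\ldots,\phi_{i-k},d\phi_{i-j}|$ and, with (\ref{notX}), produces exactly the sign $(-1)^{(i-1)k}$ and the factor $e^{\varphi_{i-1}}$ in (\ref{formfinal}). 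Your proposed mechanism --- ``differentiating Cramer's rule (\ref{cramerA1}) to express $\delta a_i^{(j)}$ in terms of $d\phi$'' --- points in the wrong direction: the target formula keeps $da_i^{(j)}$ intact, and what must be expressed through the $\phi$'s is the dual eigenvector, not the variation of the coefficients; citing (\ref{residues}) (which only controls residues at $p_\pm$) and Lemma \ref{lm:adjper} does not substitute for the system at the points $p_\ell$. You should also record explicitly that the differential in (\ref{forms}) is holomorphic at $p_-$ (so $p_-$ contributes nothing) and that at $p_+$ only the $j=1$ term has a residue, which is what isolates the $\frac12\langle d\varphi_{i-1}\wedge d\varphi_i\rangle$ term. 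As written, the central identity (\ref{formfinal}) is asserted rather than proved.
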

\noindent
{\it Proof.} The right hand side of (\ref{cramersA}) is symmetric with respect to the simultaneous permutation of rows
of the matrices in the numerator and denominator. Hence, the map (\ref{map}) is well-defined on an open domain when all denominators are non zero. The inverse map is defined by identification of $w_\ell$ with non zero roots of the polynomial
$R(w,0)=\det L(w)$ defined in (\ref{specR}). In other word $w_\ell$ is an evaluation of the function $w(p)$ on the spectral curve $\G$ of $L$ at one of the preimages on $\G$ of $E=0$, i.e. $p_\ell:(w_\ell,0)\in \G$. Under this identification it easy to see that $\phi_i$ is just the evaluation of the BA function at $p_\ell$, i.e. $\phi_i^\ell=\psi_i(p_\ell)$. Hence the first statement of the theorem is proved.

Recall that by definition, $\omega^{(1)}$ is equal to the average over $i$ of the sum of resides at
$p_{\pm}$ and $p_\ell$ of the form
\beq\label{form12}
-\frac 1{2n}\sum_{j=1}^k\delta a_i^{(j)} \wedge \left(\psi_i^*\delta \psi_{i-j}\right) E^{-1}d\ln w
\eeq
The BA function $\psi_i$ and its dual $\psi_i^+$ has zero and pole of order $i$ at $p_-$, respectively. Since,  $E$ at $p_-$
has pole of order $k+1$, the form (\ref{form12}) is holomorphic at $p_-$. Hence, it has no residue at $p_-$. At $p_+$ the function
$E$ has simple zero. Therefore, the form $E^{-1}d\ln w$ at $p_+$ has pole of order 2. At the same time at $p_+$ the functions
$\psi_i^+$ and $\psi_i$ have zero and pole of order $i$, respectively. Hence, the terms in sum ({\ref{form12}) with $j>1$
are holomorphic at $p_+$. From (\ref{two}, \ref{psipm+}) it follows that

\beq\label{res1}
-\frac 1{2n}\res_{p_+} \delta a_i^{(1)} \wedge \left(\psi_i^*\delta \psi_{i-1}\right) E^{-1}d\ln w=
-\frac 12 \delta (e^{\varphi_i-\varphi_{i-1}})\wedge e^{-\varphi_i} \delta (e^{\varphi_{i-1}})
\eeq
$$=\frac 12 \delta \varphi_{i-1}\wedge \delta \varphi_i
$$
Our next goal is to express $\psi_i^{+}(p_{\ell})$ in terms of  $\phi^\ell=\psi(p_\ell)$, which then will allow us to obtain a closed expression of $\omega^{(1)}$ in terms of $\phi^\ell$.

\begin{lem} Let $r_\ell$ be constants equal $r_\ell:=\res_{p_\ell} E^{-1}d\Omega$. Then
\beq\label{psikr}
r_\ell\psi_{i}^{+}(p_{\ell})=\frac{(-1)^{\ell+k-1} \det \wh \Phi_i^{\ell,k}}{|\phi_{i-2},\ldots,\phi_{i-k-1}|}
\eeq
where $\wh \Phi_i$ is $(k\times k)$ matrix with columns  $(\phi_{i-1},\ldots, \phi_{i-k})$, and $\wh \Phi_i^{\ell,k}$ is
obtained from $\wh \Phi_i$ by removing $\ell$-th row and the last column.
\end{lem}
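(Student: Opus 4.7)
\noindent The plan is to realize $\psi_i^+(p_\ell)$ as the solution of a $k \times k$ linear system in the known values $\phi_{i-j}^\ell = \psi_{i-j}(p_\ell)$ by applying the residue theorem to a judiciously chosen meromorphic differential on $\G$, and then to invert that system by Cramer's rule.

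For each integer $j$, consider the meromorphic differential $\Omega_j := \psi_i^+\,\psi_{i-j}\,E^{-1}\,d\Omega$ on $\G$.  The poles of $\psi$ and $\psi^+$ at $\g_s$ and $\g_s^+$ are cancelled by the zeros of $d\Omega$ at those same points (this is the very definition of the dual divisor), and $E$ is regular and generically nonzero there; hence $\Omega_j$ has poles confined to $p_\pm$ and the $k$ simple zeros $p_\ell$ of $E$ with $w(p_\ell)\neq 0$, so $\sum_\a \res_{p_\a}\Omega_j = 0$.  Using the local expansions of Lemma~\ref{L1} and Lemma~\ref{L2}, together with the dual normalizations $\psi_i^+\sim z^{-i}(1+\cdots)$ near $p_-$ and $\psi_i^+\sim e^{-\varphi_i}E^i(1+\cdots)$ near $p_+$ (both forced by (\ref{residues})), one finds $\res_{p_\ell}\Omega_j = r_\ell\,\psi_i^+(p_\ell)\,\phi_{i-j}^\ell$, and for $j = 2,\ldots,k+1$ the marked-point residues collapse to $\res_{p_+}\Omega_j = 0$ (since near $p_+$ one has $\Omega_j \sim -e^{\varphi_{i-j}-\varphi_i}E^{j-2}\,dE$, holomorphic for $j\ge 2$) and $\res_{p_-}\Omega_j = \delta_{j,k+1}$ (since near $p_-$ one has $\Omega_j \sim z^{k-j}\,dz\,(1+\cdots)$, with the leading coefficient exactly $1$ when $j = k+1$).

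Setting $u_\ell := r_\ell\,\psi_i^+(p_\ell)$, the residue theorem applied to $\Omega_j$ for $j = 2,3,\ldots,k+1$ assembles into the linear system $\sum_{\ell=1}^k u_\ell\,\phi_{i-j}^\ell = -\delta_{j,k+1}$, whose coefficient matrix has rows $\phi_{i-2}^T,\phi_{i-3}^T,\ldots,\phi_{i-k-1}^T$ and hence determinant $|\phi_{i-2},\ldots,\phi_{i-k-1}|$---precisely the denominator on the right-hand side of the claim.  Cramer's rule, combined with cofactor expansion along the $\ell$-th column (whose only nonzero entry is the $-1$ sitting in the last row), expresses $u_\ell$ as $(-1)^{k+\ell+1}\det\wh\Phi_i^{\ell,k}$ divided by that determinant; since $(-1)^{k+\ell+1} = (-1)^{\ell+k-1}$, this is exactly the formula in the lemma.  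I expect the main technical step to be the residue bookkeeping at $p_-$ for $j = k+1$, where the leading coefficients of $\psi_i$, $\psi_i^+$, $E^{-1}$, and $d\Omega$ must be tracked to see that they multiply to exactly $1$; once these normalizations are pinned down, the other residue computations and the Cramer inversion are routine.
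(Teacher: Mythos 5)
Your proposal follows the paper's own argument step for step: apply the residue theorem to the differentials $\psi_i^+\psi_{i-j}E^{-1}d\Omega$ for $j=2,\dots,k+1$, obtain a $k\times k$ inhomogeneous linear system for $u_\ell=r_\ell\psi_i^+(p_\ell)$ with coefficient matrix $(\phi_{i-j}^\ell)$, and invert it by Cramer's rule with a cofactor expansion along the column carrying the single nonzero entry of the right-hand side. The one substantive divergence is precisely at the step you flagged as delicate, the residue bookkeeping for $j=k+1$. You are right (and the paper's prose is not) that the simple pole sits at $p_-$ rather than $p_+$: at $p_+$ one has $\Omega_j\sim e^{\varphi_{i-j}-\varphi_i}E^{j-2}dE$, holomorphic for all $j\ge 2$ including $j=k+1$. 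But the leading coefficient at $p_-$ is $-1$, not $+1$: after the factors $\psi_i^+\sim z^{-i}$, $\psi_{i-k-1}\sim z^{i-k-1}$, $E^{-1}\sim z^{k+1}$ cancel, what remains is $d\Omega=\frac{dw}{nw\langle\psi^+\psi\rangle}$ with $w=z^{-n}$ and $\langle\psi^+\psi\rangle\to 1$, i.e. $-\frac{dz}{z}\bigl(1+O(z)\bigr)$ (equivalently, $\res_{p_-}\Psi_i^+\Psi_i\,d\Omega=-1$ by (\ref{residues})). Hence $\res_{p_-}\Omega_{k+1}=-1$ and the inhomogeneous term of your system should be $+\delta_{j,k+1}$, as in the paper's equation (\ref{26}), not $-\delta_{j,k+1}$; this flips the overall sign of your final answer to $(-1)^{k+\ell}$. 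That your sign error happens to reproduce the $(-1)^{\ell+k-1}$ printed in the lemma is not evidence in your favor: the printed sign is inconsistent with the paper's own use of the lemma in (\ref{27}), which requires $(-1)^{\ell+k}$, so the statement carries a typo. A second, smaller point: the Cramer minor you actually obtain (delete the last row, i.e. the $\phi_{i-k-1}$ row, and the $\ell$-th column of the coefficient matrix) has columns $\phi_{i-2},\dots,\phi_{i-k}$, whereas $\wh\Phi_i^{\ell,k}$ as literally defined has columns $\phi_{i-1},\dots,\phi_{i-k+1}$; the former is what (\ref{27}) needs, so you should state explicitly which matrix your minor is rather than silently identifying it with the (off-by-one) $\wh\Phi_i^{\ell,k}$ of the statement.
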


\begin{proof} By definition of $d\Omega$ the differential $\psi_{i}^{+}\psi_{i-j}E^{-1}d\Omega$ is holomorphic away of the marked points $p_\pm$ and the points $p_\ell$ where $E$ vanishes. For $2\leq j\leq k$ it is holomorphic at $p_\pm$. Hence the sum of its residues at $p_\ell$ equals zero:
\beq\label{25}
\sum_{\ell=1}^{k} \res_{p_\ell}  \ \psi_{i}^{+}\psi_{i-j}E^{-1}d\Omega=\sum_\ell r_\ell \psi_i^{+}(p_\ell)\phi_{i-j}^\ell=0, \ j=2,\ldots,k
\eeq
The differential $\psi_{i}^{+}\psi_{i-j}E^{-1}d\Omega$ is holomorphic at $p_-$ and has simple pole at $p_+$ with residue $-1$.
Hence,
\beq\label{26}
\sum_{l} \ \res_{p_\ell} \psi_{i}^{+}\psi_{i-k-1}E^{-1}d\Omega=\sum_\ell r_\ell \psi_i^{+}(p_\ell)\phi_{i-k-q}^\ell=1.
\eeq
Equations (\ref{25}, \ref{26}) is a system of linear equations for unknown $r_\ell \psi_i(p_\ell)$. Then Cramer's rule implies (\ref{psikr}).
\end{proof}
Note, that multiplying the right hand side of (\ref{psikr}) by $d \phi_{i-j}^\ell$ and then taking the sum over $\ell$ we can identify the latter with an expansion of the  determinant below with respect to the last column, i.e.
\beq\label{27}
-\frac 12 \sum_{\ell=1}^k r_\ell\psi_{i}^{+}(p_{\ell}) d\phi^{\ell}_{i-j}=-\frac12 \frac{|\phi_{i-2},\ldots,\phi_{i-k}, d\phi_{i-j}|}
{|\phi_{i-2},\ldots,\phi_{i-k-1}|}=
\eeq
$$
=\frac{(-1)^{k(i-1)+1}}2 \, |\phi_{i-2},\ldots,\phi_{i-k}, d\phi_{i-j}| e^{\varphi_{i-1}}
$$
The right hand side of (\ref{formfinal}) is equal to the sum of (\ref{res1}) and wedge product of (\ref{27}) with $da_i^{(j)}$.
In order to complete the proof of the theorem it remains only to note,that according to Theorem \ref{thm:1} the Hamiltonians of equations (\ref{LT1}) and (\ref{LT2}) are equal
\beq\label{Hminus}
H^-:=H_{\p_{t_1^-}}=\res_{z=0} \ln E(z)z^{-2}dz= e_1=\langle a_i^{(k)}\rangle
\eeq
and
\beq\label{Hplus}
H^+:=H_{\p_{t_1^+}}=\frac1n\res_{E=0} \ln w(E)E^{-2}dE=w_1
\eeq
where $w_1$ are the first coefficient of expansion (\ref{wzero}). Note, that according to Corollary \ref{cor:1}
\beq\label{lnw}
n^{-1}\ln w=n^{-1}(\ln \psi_{-n}-\ln\psi_0)=\langle \psi_{i-1}-\psi_i \rangle
\eeq
Then, from (\ref{two}) and (\ref{chi1}) we obtain
\beq\label{w1}
w_1=\langle \xi_1^+(i-1)-\xi_1^+(i)\rangle=-\langle a_i^{(2)}e^{\varphi_{i-2}-\varphi_i}\rangle
\eeq
and the Theorem is proved.

\bigskip
\noindent
{\bf Example} For $k=1$ equation (\ref{notX}) takes the form $e^{-\varphi_i}=(-1)^{i}\phi_{i-1}$. Then
\beq\label{finalk1}
\omega^{(1)}=\frac12 \langle d\varphi_{i-1}\wedge d\varphi_i - (-1)^{i-1}e^{\varphi_{i-1}}d (e^{\varphi_i-\varphi_{i-1}})\wedge d\phi_{i-1}\rangle=
\langle d\varphi_{i-1}\wedge d\varphi_i \rangle
\eeq
Note, also that for $k=1$ the coefficient $a_i^{(2)}=1$ and formulae (\ref{Ham-xi-eta}) takes the form (\ref{ham})

\end{document}